\documentclass[11pt]{article}
\usepackage{amsmath,amscd} 
\usepackage{amssymb}
\usepackage{amsthm}
\usepackage{fullpage}

\usepackage{algorithmicx}

\usepackage{algpseudocode}

\newtheorem{theorem}{Theorem}[section]
\newtheorem{proposition}[theorem]{Proposition}

\newtheorem{lemma}[theorem]{Lemma}

\newtheorem{corollary}[theorem]{Corollary}
\newtheorem{myconj}[theorem]{Conjecture}

\newcommand{\leftr}{\mathtt{left}}
\newcommand{\rightr}{\mathtt{right}}

\def\nat{{\mathbb N}}
 \def\real{{\mathbb R}}
 \def\rat{{\mathbb Q}}

\begin{document}

\title{A note on the complexity of comparing succinctly represented integers,
with an application to maximum probability parsing}
\author{Kousha Etessami\\U. of Edinburgh\\{\tt kousha@inf.ed.ac.uk}
\and 
Alistair Stewart\\U. of Edinburgh\\{\tt stewart.al@gmail.com} 
\and
Mihalis Yannakakis\\Columbia U.\\{\tt mihalis@cs.columbia.edu}}

\date{}

\maketitle

\begin{abstract}
\noindent The following two decision problems capture
the complexity of comparing integers or rationals that are 
succinctly represented
in product-of-exponentials notation, or equivalently, 
via arithmetic circuits using
only multiplication {\small \&} division gates, and integer inputs:

\vspace*{0.05in}

\noindent {\bf Input instance:} four lists of positive integers:

\vspace*{0.05in}

$a_1,\ldots, a_n \in \nat_+^n$; \ $b_1,\ldots,b_n \in \nat_+^n$;   
\ $c_1,\ldots,c_m \in \nat_+^m$;
\ $d_1,\ldots,d_m \in \nat_+^m$;

\vspace*{0.04in}

\noindent where each of the integers is represented in binary.\\
  
\noindent {\bf  Problem 1 (equality testing):}  Decide whether  
$a_1^{b_1} a_2^{b_2} \ldots a_n^{b_n} = c_1^{d_1} c_2^{d_2}\ldots c_m^{d_m}$.\\

\noindent {\bf Problem 2  (inequality testing):} Decide whether  
$a_1^{b_1} a_2^{b_2} \ldots a_n^{b_n}  \geq c_1^{d_1} c_2^{d_2}\ldots c_m^{d_m}$.\\

Problem 1 
is easily decidable in polynomial time using a simple iterative
algorithm.
Problem 2 is much harder. 
We observe that the complexity of Problem 2 is intimately 
connected
to deep conjectures and results in number theory.
In particular,
if a refined form of the {\em ABC conjecture}  
formulated by Baker in 1998
holds, or if the older {\em Lang-Waldschmidt
conjecture} (formulated in 1978) on linear forms in logarithms 
holds, then Problem 2 is decidable in P-time 
(in the standard Turing model of computation).
Moreover,
it follows from
the best available quantitative bounds on
linear forms in logarithms, e.g., by Baker and W\"{u}stholz
\cite{BakerWust93} or Matveev \cite{Matveev-2000II},
that
if $m$ and $n$ are fixed universal constants then Problem 2
is decidable in P-time (without relying on any conjectures).
This latter fact was observed earlier by Shub (\cite{Shub-smalefest-93}).

We describe one application: P-time  
maximum probability parsing for {\em arbitrary} stochastic
context-free grammars (where $\epsilon$-rules are allowed).

\end{abstract}

\section{Introduction}
For many computations involving 
large integers, or large/small non-zero rationals,  
it is convenient to be able 
to manipulate and compare the numbers 
without having to 
compute a standard binary representation of them.
Indeed, in many settings it is intractable
to compute such a binary representation.
This has motivated compact representations such as classic floating point, but floating point numbers also suffer a loss
of information (precision), which one would like to avoid if possible.

There are a number of succinct representations 
one could consider for such a purpose.
One approach is to represent integers via
arithmetic circuits (straight-line programs), 
with gates $\{+,-,* \}$, and with integer inputs (represented in binary). 
However, the problem of deciding
whether an integer represented via an arithmetic circuit
 is $\geq$ another such integer, referred to as 
{\bf PosSLP}  \cite{ABKM06}, captures
{\em all} of polynomial time in the unit-cost arithmetic RAM model
of computation.  The best complexity upper bounds we know for PosSLP
in the standard Turing model of computation is
the 4th level of the {\em counting hierarchy}, $_{{\Huge{P}}}{PP^{PP^{PP}}}$, as established by Allender, B\"{u}rgisser, Kjeldgaard-Pedersen,
and  Miltersen in \cite{ABKM06}.
PosSLP subsumes other hard problems whose complexity
is open, like the well-known {\em square-root-sum} problem (\cite{GGJ76}),
and it appears highly unlikely that one could show that PosSLP 
is even in NP.
On the other hand, as noted in \cite{ABKM06}, the problem of testing {\em equality} of 
integers represented by two such arithmetic circuits, {\bf EquSLP},
is P-time equivalent to {\em polynomial identity testing}, and can 
be decided in coRP.  However, it remains open whether EquSLP is in NP 
and showing this would already imply hard circuit lower bounds
(\cite{KI03}), so it is likely to be difficult.
On the other hand, there are no hardness results
known for PosSLP with respect to standard 
complexity classes, beyond
$P$-hardness.  In particular, PosSLP is not known to be NP-hard.

Note that if
 the arithmetic in the computation
is confined to only {\em linear} operations $\{+,-\}$
on registers, and
multiplication by constants in the input, then
the encoding length of the numbers is 
linear in the number of arithmetic operations, so
we can represent all the numbers exactly, 
and P-time in the Turing model can simulate
polynomial time unit-cost {\em linear} arithmetic computation.

Now consider another natural restricted class of
arithmetic circuits, which turn out to 
be useful in a number of settings:
arithmetic circuits containing
only gates $\{ * , / \}$. 
An essentially equivalent representation is the following:
For a list of rational numbers $a = \langle a_1, \ldots, a_n \rangle$, and
a list of integers $b = \langle b_1, \ldots, b_n \rangle$,
both of dimension $n$, 
we use $a^b$ as a shorthand notation for:
$a_1^{b_1} a_2^{b_2} \ldots a_n^{b_n}$.
We shall refer to this succinct representation of integers and rationals
as {\bf\em product of exponentials} ({\bf PoE})  representation.
PoE representation is easily
seen to be equivalent to representation via 
arithmetic circuits with integer
inputs given in binary and with multiplication and division gates $\{ * , / \}$.
The following is shown in the appendix.

\begin{proposition}  
There is a simple P-time
translation from a given number represented in PoE to
the same number represented as an arithmetic 
circuit over $\{*,/\}$ with integer inputs (represented in binary).
Likewise, there is a simple P-time translation in the other direction.
\label{prop:spe-is-same-as-arith-circ}
\end{proposition}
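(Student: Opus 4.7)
For the forward direction (PoE to circuit), my plan is to build the circuit one factor at a time by repeated squaring. Write each rational base as $a_i = p_i/q_i$ with $p_i, q_i$ positive integers that become leaves of the circuit. It then suffices to construct, for each integer leaf $x$ and each integer $b$ given in binary, a sub-circuit of size $O(\log|b|)$ that outputs $x^{|b|}$: scan the bits of $|b|$ from the top and, at each step, square the current intermediate value and multiply by $x$ whenever a $1$-bit is encountered. One such sub-circuit for the numerator and one for the denominator of each $a_i^{b_i}$, combined by a top-level multiplication chain of length $n-1$ and by division gates to handle negative $b_i$'s, yields a circuit of size polynomial in the PoE encoding length.

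For the reverse direction (circuit to PoE), my plan is to propagate exponent vectors in topological order. Let $\ell_1,\ldots,\ell_k$ be the integer leaves of the given circuit $C$, and let $s$ be its size. For every gate $g$ I maintain an integer vector $(e_{g,1},\ldots,e_{g,k})$ satisfying the invariant that the rational computed at $g$ equals $\prod_{i=1}^{k} \ell_i^{e_{g,i}}$. At the leaf $\ell_j$ this vector is the standard $j$-th basis vector; at a multiplication gate $g = g_1 * g_2$ I set $e_{g,i} := e_{g_1,i} + e_{g_2,i}$; at a division gate I subtract instead. The vector at the output gate, after stripping positions with exponent $0$, directly yields lists $a, b$ with $a^b$ equal to the value of $C$.

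The one point needing verification, and the only place where anything could conceivably go wrong, is the bit length of the exponents produced in the second direction. Since each exponent is built by additions and subtractions of earlier exponents starting from $0$/$1$-valued basis vectors, an easy induction on the depth of $g$ in $C$ gives $|e_{g,i}| \le 2^{\mathrm{depth}(g)} \le 2^s$, so each entry occupies $O(s)$ bits and the output PoE list has total length $O(s^2)$. Adding binary integers of that size at each of the $s$ gates is plainly polynomial time, which completes both translations.
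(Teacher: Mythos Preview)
Your proposal is correct and follows essentially the same approach as the paper: repeated squaring to build the circuit in the forward direction, and propagating exponent vectors (adding at $*$-gates, subtracting at $/$-gates) with the $2^{\mathrm{depth}}$ bound in the reverse direction. The only cosmetic differences are that you explicitly handle rational bases and negative exponents in the forward direction and process gates in topological order rather than by induction on depth, but the underlying argument is the same.
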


Consider the problem of deciding whether
one rational number, $a^b$, 
given in PoE representation, is greater than (or, respectively, equal to) 
another rational number $c^d$, given
in PoE.  We remark that, again, the inequality testing problem 
basically captures the power of polynomial time in the unit-cost
arithmetic RAM model of computation, where the only arithmetic
operations permitted are $\{*,/\}$.

\vspace*{0.03in}

\noindent {\bf Input instance:} four lists of positive integers:\\  
\mbox{} $\quad \quad \quad$ $a_1,\ldots, a_n \in \nat_+^n$; \ $b_1,\ldots,b_n \in \nat_+^n$;   
\ $c_1,\ldots,c_m \in \nat_+^m$;
\ $d_1,\ldots,d_m \in \nat_+^m$;\\ 
where each of the integers is represented, as usual, in binary.

\vspace*{0.03in}  

\noindent {\bf  Problem 1 (equality testing):}  Decide whether or not $a^b = c^d$.

\vspace*{0.02in}

\noindent {\bf Problem 2 (inequality testing):} Decide whether or not $a^b \geq c^d$.

\vspace*{0.03in}

\noindent Note that, by rearrangement, these problems are equivalent to 
the versions allowing bases $a_i$ and $c_j$  to be
rationals (encoded in binary), and allowing $b_i$ and $d_j$
to be any integers (in binary).

Let us note that PoE representation is in fact already widely used in practice.
Specifically, 
because iterated multiplication may make
numbers very small or very large, practitioners often explicitly recommend using a 
{\em logarithmic transformation} to represent numbers such as 
$a_1^{b_1} \ldots a_n^{b_n}$  by:
$$ b_1 \log(a_1) + b_2 \log(a_2) + \ldots + b_n \log(a_n)$$

This allows multiplications and divisions to be carried
out by using only addition on the coefficients of the log representations.
Note that such ``log representations''
are basically equivalent to PoE, as long as the
logarithms are only interpreted {\em symbolically}.
(Of course, one still needs to be able to {\em compare}
 such sums of logs, and we will return to this shortly.)
One setting where log transformation 
is recommended in practice
is for the analysis of Hidden Markov Models (HMMs) using the Viterbi algorithm, and 
for probabilistic parsing.
For example, the book Durbin et. al. \cite{DEKM99} (section 3.6) says:
\begin{quote}
{\em On modern floating point processors we will run into numerical
problems when multiplying many probabilities in the Viterbi, forward,
or backward algorithms.  For DNA for instance, we might want to model genomic 
sequences of 100 000 bases or more.  Assuming that the product
of one emission and one transition probability is $0.1$, the
probability of the Viterbi path would then be on the order of 
$10^{-100000}$.  Most computers would behave badly with such numbers........
For the Viterbi algorithm we should always use the logarithm 
of all probabilities.  Since the log of a product is the sum of the logs, all the
products are turned into sums.........
It is more efficient to take the log of all of the model parameters
before running the Viterbi algorithm, to avoid calling the
logarithm function repeatedly during the dynamic programming
iterations. } (\cite{DEKM99}, pages 78-79.)
\end{quote}

We justify these comments from a complexity-theoretic viewpoint.
In fact, we do so
in the more general context of computing a maximum probability
parse tree for a given string and given stochastic context-free grammar (SCFG), which generalizes the Viterbi
algorithm for finite-state HMMs.
We will observe that if deep conjectures in number theory
hold then Problem 2 can be solved in polynomial time
by employing the PoE (or log) representation,
and also
that the PoE
representation can be used to  obtain P-time algorithms
for computing a maximum probability parse tree for 
a given string with a given SCFG, and for solving related problems.

We first show Problem 1  is decidable in P-time using an
easy iterative algorithm.
Problem 2 is much harder.
We observe that if the {\em Lang-Waldschmidt
conjecture} \cite{LangDio78} holds, then Problem 2 is decidable in P-time.
Likewise, if Baker's refinement   \cite{Baker-abc-1998} of the {\em ABC conjecture}  of  Masser and Oesterl\`{e}
holds, then again it implies Problem 2 is decidable in P-time. 
The ABC conjecture is considered
one of the central conjectures of modern number 
theory (see, e.g.,  \cite{granville-abc-2002,baker-wustholz-survey,Waldschmidt-survey04,baker-wustholz-book07}). 

Furthermore, we observe that
the best currently known  quantitative bound in Baker's theory of
linear forms in logarithms, e.g., those 
due to Baker and W\"{u}stholz \cite{BakerWust93} or
Matveev \cite{Matveev-2000I,Matveev-2000II}, 
yield that when $m$ and $n$ are fixed universal constants, 
Problem 2 is decidable in polynomial time.     
We note that Shub 
has already observed this fact 
in \cite{Shub-smalefest-93} (Theorem 6, page 454), 
namely that for fixed constants $m$ and $n$,
Problem 2 is decidable in P-time.\footnote{We thank one of the
anonymous referees for bringing 
Shub's paper \cite{Shub-smalefest-93} to our attention.}
Although Shub stated a correct theorem, and sketched a proof based
on the same ideas, the proof in \cite{Shub-smalefest-93}
contains some inaccuracies.   In particular, it mis-states
the lower bounds for linear forms in logarithms.  
In fact, the lower bound 
quoted in \cite{Shub-smalefest-93} is false, 
as we shall explain in a footnote to Proposition \ref{prop:main-comp-prop}.
For this reason, in Proposition \ref{prop:main-comp-prop} we provide a proof
of this result, first observed by Shub \cite{Shub-smalefest-93},
using the best currently available
bounds on linear forms in logarithms (\cite{BakerWust93,Matveev-2000I,Matveev-2000II}).

It is well known that the ABC conjecture, and
related conjectures involving
explicit bounds for linear forms in logarithms,
have important applications for {\em effective solvability} 
of various diophantine equations.
However, to our knowledge, 
it has not been observed previously that
these number-theoretic conjectures 
are also connected to  
the {\em polynomial time solvability}
of basic problems such as the comparison of 
succinctly represented numbers.

We give one application to maximum probability parsing
for {\em stochastic context-free grammars} (SCFG).
Computing the maximum probability (most likely) parse of a given string 
for a SCFG is a
basic task in statistical natural language processes (NLP) \cite{ManSch99}.
Until now, it was only known to be 
computable in P-time for particular classes of SCFGs,
in particular  SCFGs in Chomsky Normal Form, and 
SCFGs not containing arbitrary $\epsilon$-rules.
For general SCFGs, $G$, 
the maximum probability of a parse tree for even a fixed string, $w$, may
be as small as $1/2^{2^{|G|}}$, where $|G|$ is the encoding size
of the SCFG.  Thus one cannot  express such probabilities
in P-time in  binary representation.
However, the maximum parse tree probabilities 
can be expressed concisely in PoE, and if we can check 
inequalities on such encodings of rational 
numbers efficiently, then we can compute the maximum 
parse tree probability in P-time.
 
Specifically, we show that if Baker's refined ABC conjecture holds, 
or if the Lang-Waldschmidt
conjecture holds, then
given an arbitrary SCFG $G$,
and given an arbitrary string $w \in \Sigma^*$,
there is a polynomial-time algorithm that:
(1) computes a (succinct
representation of) a {\em maximum probability
parse tree} for the string $w$ and
also computes (a succinct representation of) the exact 
maximum parse probability 
$p^{\max}_{G,w}$; and
(2) given also a rational probability $q$ given in binary (or in PoE),
decides whether the maximum parse probability of $w$,
$p^{\max}_{G,w}$, satisfies $p^{\max}_{G,w} \geq q$;
(3) given also another string $w' \in \Sigma^*$, decides
whether $p^{\max}_{G,w} \geq p^{\max}_{G,w'}$.  
Furthermore, when the SCFG has
a fixed constant number, $m$, of distinct probabilities 
labeling its rules, all of the above problems (1) -- (3) are in P-time
(in the Turing model), 
without assuming any number theoretic conjectures.
Finally, we show that essentially the same algorithm can
be used to approximate the maximum
parsing probability and compute (a succinct representation of)
an $\epsilon$-optimal parse tree for a string $w$ and a SCFG $G$ 
in time polynomial in the size of 
$G, w$ and $\log(1/\epsilon)$, without assuming any conjectures.

\section{Deciding equality of succinct integers in P-time}
\label{sec:eq}

We now give a simple iterative P-time algorithm for Problem 1 (equality testing).
The algorithm is in Figure \ref{fig:alg1}.  It simply repeatedly computes $g_{i,j} = GCD(a_i,c_j)$ for 
pairs  $a_i$ and $c_j$, and if $g_{i,j} > 1$, then it does the appropriate
adjustments on the succinct representations.   It terminates
when $g_{i,j} = 1$ for all $i,j$.   
The two remaining numbers are $1$ if and only if the original
numbers were equal.

\begin{figure}
\noindent {\bf Input:} 4 lists $a$, $b$, $c$, $d$, of positive integers,
$a$ \& $c$
not containing $1$,  $|a| = |b|$, \& $|c| = |d|$.

\vspace*{0.05in}

\noindent {\bf Task:} Decide whether or not $a^b = c^d$.

\vspace*{0.05in}
\noindent \begin{algorithmic}

\While{there is some $i \in [a]$ and $j \in [c]$ such that 
$GCD(a_i,c_j) > 1$}
\State  Let $i \in [a]$ and $j \in [c]$ be such that
 $g_{i,j} = GCD(a_i,c_j)  > 1$;

\State Let $a_i  \gets a_i/g_{i,j}$; \  Let $c_j \gets c_j/g_{i,j}$;

\If{$a_i = 1$}
\State Remove $a_i$ from list $a$, and remove $b_i$  from list $b$
\EndIf

\If{$c_j =1$}
\State  Remove $c_j$ from list $c$, and remove $d_j$ from list $d$
\EndIf

\If{$b_i > d_j$}

\State  Append integer $g_{i,j}$ to the end of list $a$;
\State  Append integer $b_i - d_j$ to the end of list $b$;

\ElsIf{$b_i <  d_j$}

\State  Append integer $g_{i,j}$ to the end of list $c$;
\State  Append integer $d_j - b_i$ to the end of list $d$;

\EndIf

\EndWhile

\If{$a$ and $c$ are both the empty list}

\Return   EQUAL

\Else

\Return   NOT-EQUAL
\EndIf

\end{algorithmic}

\caption{Simple P-time algorithm for Problem 1 \label{fig:alg1}} 
\end{figure}

In more detail,
at the start of the iterative algorithm, we initialize  four lists
of positive integers:
$a := \langle a_1,\ldots,a_n \rangle$, $b := \langle b_1, \ldots, b_n \rangle$
, $c := \langle c_1, \ldots, c_m \rangle$, and $d :=  \langle d_1, \ldots, d_m \rangle$.
We can assume wlog that lists
$a$ and $c$ do no contain the number $1$.
For a list $g$, we use $|g|$ to denote its length.
So, initially, $|a| = |b| = n$ and $|c| = |d| = m$.
Every iteration of the algorithm will maintain the
following invariant: $|a| = |b|$ and $|c| = |d|$. 
For a list $g$, we let $[g] = \{1, \ldots, |g|\}$.
For a list $g$, and for $i \in [g]$ we use $g_i$ to
denote the $i$'th element of the list $g$. 
Given two lists of integers, $g$ and $f$, where $|g| = |f| = r$,
{we use $g^f$ to denote $g_1^{f_1} \ldots g_r^{f_r}$}.

\begin{proposition}
\label{prop:alg1}
The algorithm in Figure 1 decides Problem 1, and runs in
polynomial time.\footnote{It is worth noting that a more careful
implementation of this simple algorithm, based on existing 
``factor refinement'' procedures (\cite{BachDS93,bernstein05}),
can be used to solve Problem 1 very efficiently.} 
\end{proposition}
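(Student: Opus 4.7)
The plan is to establish two invariants maintained by the while-loop and then to bound the number of iterations with a potential function on the total bit length of the bases in lists $a$ and $c$.

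First I would show that each iteration preserves the truth value of the statement ``$a^b = c^d$'' (where $a,b,c,d$ refer to the current lists). Fix an iteration that picks $i,j$ with $g = g_{i,j} = \mathrm{GCD}(a_i,c_j) > 1$, and write $a_i = g\cdot a'_i$, $c_j = g\cdot c'_j$. Then $a_i^{b_i} = g^{b_i}(a'_i)^{b_i}$ and $c_j^{d_j} = g^{d_j}(c'_j)^{d_j}$, so, assuming $b_i \geq d_j$, the original equality is equivalent to
\[ g^{b_i - d_j}\cdot (a'_i)^{b_i}\cdot \prod_{k\neq i}a_k^{b_k} \;=\; (c'_j)^{d_j}\cdot \prod_{\ell \neq j}c_\ell^{d_\ell}, \]
which is exactly the equality encoded by the updated lists after (i) replacing $a_i$ with $a'_i$ and $c_j$ with $c'_j$, (ii) dropping any term whose base became $1$, and (iii) appending $g^{b_i - d_j}$ to the $a$-side. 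The case $d_j > b_i$ is symmetric, and $b_i = d_j$ requires no append. The invariants $|a|=|b|$, $|c|=|d|$, and $a_i, c_j > 1$ are preserved by construction.

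Next I would bound the running time using the potential $\Phi = \sum_i \log a_i + \sum_j \log c_j$, i.e.\ the total bit length of the current bases (a base equal to $1$ contributes $0$, so removals of such bases do not change $\Phi$). One iteration reduces $\log a_i$ by $\log g$, reduces $\log c_j$ by $\log g$, and adds $\log g$ back at most once via the append; hence $\Delta \Phi \leq -\log g \leq -1$. Since the initial value of $\Phi$ is at most the input size in bits, the loop terminates after polynomially many iterations. Every appended base divides an existing one and every appended exponent is a difference of existing ones, so all integers manipulated stay of polynomial bit length and each iteration (dominated by one $\mathrm{GCD}$) runs in polynomial time.

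Finally I would verify the termination output. When the loop exits, $\mathrm{GCD}(a_i,c_j)=1$ for all remaining $i,j$, so the integer $a^b$ is coprime to $c^d$. Since the invariant forces all surviving $a_i,c_j \geq 2$, coprimality combined with equality forces both products to equal $1$, hence both lists to be empty; conversely, if both lists are empty at termination then both sides are the empty product $1$. Together with the preservation invariant this yields correctness. The step most at risk of going wrong is the potential argument --- one must confirm that the single appended copy of $g$ cannot outweigh the two copies removed by division --- and this is precisely what makes the ``divide both sides by the common divisor'' operation reduce rather than increase the encoding size.
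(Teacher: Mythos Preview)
Your proof is correct and shares the paper's overall structure: establish that the truth value of $a^b=c^d$ is preserved by each iteration, verify the output at termination via coprimality, and then bound the number of iterations and the per-iteration work. Where you differ is in the termination argument. The paper introduces the notion of a \emph{non-trivial factored subform} and argues, via unique prime factorization, that after each iteration the base lists $a$ and $c$ are non-trivial factored subforms of the previous ones, so the number of iterations is bounded by the total count of prime factors of the original bases. Your potential $\Phi=\sum_i\log_2 a_i+\sum_j\log_2 c_j$ reaches the same conclusion more directly: two copies of $g$ are divided out while at most one is appended, giving $\Delta\Phi\le -\log_2 g\le -1$. This also immediately bounds the list lengths (each surviving base contributes at least $1$ to $\Phi$) and the entry sizes, which the paper establishes by a separate argument. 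One small imprecision: testing the loop condition may require up to $|a|\cdot|c|$ GCD computations, not a single one, but this does not affect the polynomial-time conclusion.
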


\begin{proof}
For the correctness of the algorithm, 
we first observe that the following invariant is maintained: the equality
\begin{equation}\label{eq:basic}
a^b = c^d
\end{equation} 
holds before an iteration
of the while loop {\em if and only if} it holds after an iteration
of the while loop.

To see why this is the case, suppose
that $GCD(a_i,c_j) = g_{i,j} > 1$, and suppose
that an iteration of the while loop is conducted using
this $i$ and $j$.   Then it is clear that the new lists
$a$, $b$, $c$, and $d$ are obtained by 
simply factoring $g_{i,j}$ out of $a_i$ and $c_j$ on the
left and right side of equation (\ref{eq:basic}), and then
dividing both sides of the equation (\ref{eq:basic}) by 
$g_{i,j}^{d_j}$  or by $g_{i,j}^{b_i}$, depending,
on whether $b_i \geq d_j$, or $d_j \geq b_i$, respectively.
Note that when
$b_i = d_j$, dividing both sides by $g_{i,j}^{b_i}$ eliminates
this power of $g_{i,j}$ from both sides, so we do not need
to include a power of $g_{i,j}$ on either side.
Since factoring out $g_{i,j}$ and dividing both sides of the
equation by a positive value are reversible, we conclude
that the invariant is maintained.

Let us now argue that if the algorithm halts,
i.e.,
if the while loop terminates, then the output is correct.
Indeed, if the while loop terminates this
 means for every
pair of numbers $a_i$ and $c_j$ in the lists $a$ and $c$ 
we have $GCD(a_i,c_j) = 1$.   Thus we also have  
$GCD(a^b , c^d) = 1$.   But in that case if either $a^b \neq 1$ or $c^d \neq 1$,
then $a^b \neq c^d$.  Thus if the while loop halts the algorithm correctly
returns ``EQUAL'' or ``NOT-EQUAL'' depending on whether or not $a^b = c^d$ for
the original input lists.

The only thing left to establish is that the algorithm always halts
and runs in polynomial time.

For a list of positive integers $a$,  not containing the number $1$,
let us call another list $a'$ of positive integers a {\em factored subform}
of $a$, if there is a mapping $\phi: [a'] \rightarrow [a]$
that maps indices $r \in [a']$ to indices $\phi(r) \in [a]$,
such that for all $i \in [a]$

$$ (\prod_{r \in \phi^{-1}(i)} a'_r)  \mid  a_i $$

In other words, the product of all entries of $a'$ that map to 
the entry $a_i$ of $a$ should divide $a_i$. 

We shall call $a'$ a {\em non-trivial} factored subform of $a$
if the list $a'$ does not contain any entries equal to $1$ either, 
and furthermore no permutation of the list $a'$ is identical to the list $a$.

Next, let us observe that after each iteration of the While
loop, the positive integer lists $a$ and $c$ 
must each be non-trivial factored subforms of the 
respective positive integer lists 
$a$ and $c$ prior to that iteration of the while loop.
Thus, by induction on the number of iterations, after any number of 
iterations of the while loop we must have  $a$ 
and $c$ consisting of non-trivial factored
subforms of the original input lists $a$ and $c$ of positive integers.

But the while loop must therefore terminate, because by the fundamental
theorem of arithmetic (unique prime factorization) 
there can not exist an unbounded sequence
of lists of positive integers 
$$a^0, a^1, a^2 , a^3, \ldots$$ 
such that each one does not contain the number $1$, 
and such that for all $i \in \nat$
$a^{i+1}$ is a non-trivial factored
subform of $a^{i}$.

Furthermore, for the same reason, the while loop must terminate
after a number of iterations that is polynomial in the encoding
size of the original lists $a$ and $c$.  Namely, the number
of iterations can not be greater than the number of prime
factors of the integers in the lists $a$ and $c$.

Furthermore, the encoding size of the lists $a$ $b$, $c$, and  $d$,
always remains polynomial in the encoding size of the original
input lists.  This is so, firstly, because $a$ and $c$ 
always remain factored subforms of the original lists, respectively,
and furthermore
because the maximum value of the positive integers in lists $b$ and $d$
(which are always the same size as their corresponding lists $a$ and $c$),
is never more than their maximum value in the original lists $b$ and $d$,
respectively.

It is then clear that each iteration can be carried
out in time polynomial in the encoding size of the original lists,
because each iteration of the while loop, when the
current lists given by $a$, $b$, $c$, and $d$, requires at most $|a|*|c|$
computations of GCDs on pairs of integers that are no bigger
than the maximum integer in the original lists, and 
as already established at any iteration the lists themselves
are only polynomially bigger than the original lists.
 \end{proof}

\section{Deciding inequalities between succinct integers}

\label{sec:ineq}

We now consider the much harder Problem 2 (inequality testing).
We first recall a deep theorem due to Baker and W\"{u}stholz
on explicit quantitative bounds on linear forms in logarithms.\footnote{The general theorem regards logarithms
of algebraic numbers.  We will state it in the special
case of logarithms of standard integers, which suffices for
our purposes.}
Let $a_1,\ldots, a_n$ be positive integers, none
of which are equal to $0$ or to $1$,  let $b_1, \ldots, b_n$
be arbitrary integers not all equal to $0$.  Let  $e$ be the base of the natural
logarithm, and define $B := \max \{ |b_1| , |b_2|, \ldots, |b_n|, e\}$. 
Let  
$h'(a_i) = \max \{  \log a_i , 1 \}$,   
where $\log$ denotes
the natural logarithm.  Let
$$ \Lambda(a,b)  :=  \log(a^b) = b_1 \log a_1 + b_2 \log a_2 + \ldots + b_n \log a_n $$
For any list $a$ of positive integers, and list $b$ of integers, both of length $n$, let
$$G(a,b) := e^{-C(n) h'(a_1) h'(a_2) \ldots h'(a_n) \log B}$$
where $C(n) := 18 (n+1)! n^{n+1} 32^{n+2} \log(2n)$.

\begin{theorem}[(Baker-W\"{u}stholz, 1993 \cite{BakerWust93})]
\label{thm:bak-wust}
For any list $a$ of positive integers 
and any list $b$ of non-zero integers\footnote{
Although this will be obvious to experts, let us
explain why this theorem is indeed a specialization 
(to positive integer $a_i$'s) of Baker-W\"{u}stholz's.  
 The Theorem
in \cite{BakerWust93} allows $a_i$'s 
to be algebraic numbers, and $\log a_i$ is defined
to be {\em  any} determination of the log. 
Clearly, when
$a_i$ is a rational positive integer, $\log a_i$ is uniquely determined.  Furthermore, 
if $d$ is the degree of the field extension $\rat(a_1,\ldots,a_n)$
over $\rat$, then
in \cite{BakerWust93},  the
``height'' function $h'(a_i)$  is defined instead to be 
$h'(a_i) = \max \{ h(a_i), (1/d)|\log a_i| , 1/d \}$,
where $h(a_i)$ is the {\em absolute logarithmic Weil height}
of $a_i$, which we will discuss next.   
First note that in the setting of positive integers, $a_i$, clearly
$d = 1$, and thus  $h'(a_i) = \max \{ h(a_i), \log a_i , 1\}$.
Now, one way to define the absolute logarithmic Weil height $h(a_i)$ 
(see \cite{Schinzel00}) 
is this: let $p(x) \in {\mathbb Z}[x]$ be the minimal polynomial 
for the algebraic number $a_i$, suppose $p(x)$ has degree $d$,
let $f_0$ be the leading coefficient of $p(x)$,
and let $\alpha_1, \ldots,\alpha_d$ be the complex roots of $d$ (with
repetition).   Then $$h(a_i)  =  \log ( (|f_0| \prod^d_{i=1} \max(1,|\alpha_i|))^{1/d})$$
Now, in the simple case where $a_i$ is a positive integer, we see immediately that its minimal
polynomial is $p(x) \equiv  x - a_i$, and thus  that $h(a_i) = \log a_i$.
Thus, for positive integers $a_i$,  indeed $h'(a_i) =  \max \{ \log a_i, 1 \}$, as given.
}, where both lists have the same length $n$,
if $\Lambda(a,b) \neq 0$,
then 
$$ |\Lambda(a,b) |  \geq G(a,b).$$
\end{theorem}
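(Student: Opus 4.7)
The statement is the Baker--W\"{u}stholz bound on linear forms in logarithms (specialized to rational integer bases), which is a deep external result from transcendence theory rather than something one would reprove from scratch; a proof proposal here therefore amounts to sketching the classical Baker method with W\"{u}stholz's refinements. The overall plan is to argue by contradiction: assume $\Lambda = \Lambda(a,b) \neq 0$ yet $|\Lambda| < G(a,b)$, and use this hypothesis to derive a nontrivial algebraic consequence that forces $\Lambda$ to actually equal zero.

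First I would introduce parameters $L, T, N$ (the partial degree of an auxiliary polynomial, a multiplicity, and a range of integer evaluation points, all to be optimized at the end) and apply Siegel's lemma to construct a nonzero polynomial
$$P(X_1, \ldots, X_n) \in \mathbb{Z}[X_1, \ldots, X_n]$$
of partial degree at most $L$ in each $X_i$, with integer coefficients of controlled size, such that the entire function
$$\Phi(z) = P(a_1^z, \ldots, a_n^z)$$
vanishes at every integer $z \in \{0, 1, \ldots, N\}$ to order at least $T$. The existence of such a $P$ reduces to a pigeonhole count comparing the number of free coefficients of $P$ against the number of linear vanishing conditions, and the coefficient bounds needed feed directly into the eventual constant $C(n)$.

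The core analytic step is the \emph{extrapolation}. The assumption $|\Lambda| < G(a,b)$ says that $a_n^{b_n}$ is exponentially close to $\prod_{i<n} a_i^{-b_i}$, so on a large complex disk $\Phi$ is well approximated by a function depending on one fewer monomial direction. Combining Schwarz's lemma on a disk of large radius with a Liouville-type inequality (the algebraic integer $\Phi(m)$ is either zero or bounded below in terms of its height and degree) forces $\Phi$ to vanish on an enlarged integer range $\{0, \ldots, N'\}$ with $N' \gg N$. Iterating the extrapolation eventually produces so many zeros that $P$ itself must be identically zero, contradicting its construction; the sharp form of this contradiction, as used by Baker and W\"{u}stholz, invokes W\"{u}stholz's analytic subgroup theorem and the associated zero estimates on commutative algebraic groups to rule out the relevant degeneracies.

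The main obstacle, and the reason the bound takes its particular explicit shape, is tracking the constant $C(n) = 18(n+1)! \, n^{n+1} \, 32^{n+2} \log(2n)$ precisely through every stage of the argument. Each parameter choice contributes multiplicatively: Siegel's lemma contributes the factorial factors, the dimension and degree of the ambient algebraic group account for the $n^{n+1}$ term, and the combinatorics of repeated extrapolation produce the $32^{n+2}$ factor. Obtaining these explicit constants simultaneously with the optimization of $L, T, N$ is the technically hard part of \cite{BakerWust93}. For the purposes of the present paper I would not reprove any of this; I would simply invoke their theorem and verify that the statement above is the correct specialization to positive rational integers, as already spelled out in the footnote.
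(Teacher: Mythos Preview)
Your assessment is exactly right: the paper does not prove this theorem at all. Theorem~\ref{thm:bak-wust} is simply quoted from \cite{BakerWust93}, and the only justification the paper supplies is the footnote verifying that the stated version is the correct specialization of the general Baker--W\"{u}stholz result to positive integer bases (by checking that $d=1$ and that the absolute logarithmic Weil height of a positive integer $a_i$ equals $\log a_i$). Your closing sentence already captures this precisely.

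The sketch you give of the auxiliary-function method --- Siegel's lemma to build $P$, the extrapolation step via Schwarz and Liouville, and the appeal to W\"{u}stholz's analytic subgroup theorem and zero estimates --- is a faithful high-level outline of how \cite{BakerWust93} actually proceeds, but it goes well beyond anything the present paper attempts. For the purposes of matching the paper, the correct ``proof'' is simply: cite \cite{BakerWust93} and perform the height computation in the footnote.
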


A lower bound similar to Baker-W\"{u}stholz's was obtained by 
Waldschmidt \cite{Waldschmidt-93}.
A somewhat improved bound was obtained more recently by Matveev 
\cite{Matveev-2000I,Matveev-2000II}, who showed 
$|\Lambda(a,b) |  \geq H(a,b)$,
where $H(a,b) := e^{-C'(n) h'(a_1) h'(a_2) \ldots h'(a_n) \log B}$
and $C'(n) := 2.9 (2e)^{2n+6} (n+2)^{9/2}$.
(See Nesterenko's presentation \cite{Nesterenko-2003}.)  
The improved bound by Matveev does not have any stronger consequences
for our complexity theoretic purposes, beyond that of Theorem \ref{thm:bak-wust}.

These results constitute
the current state of the art: they are the best known 
lower bounds for (homogeneous) linear forms in
logarithms of rational numbers (and for more general numbers).  
Next we state an older conjecture of Lang and Waldschmidt, which 
would significantly strengthen both Theorem \ref{thm:bak-wust}
and Matveev's improved bound:

\begin{myconj}[Lang-Waldschmidt, 1978 (cf. \cite{LangDio78,Waldschmidt-survey04})]
\label{conj:lang-wald}
For every $\epsilon > 0$, 
there is a $C(\epsilon) > 0$, such that
for any list of positive integers $a$ and any list 
of non-zero integers $b$, where both lists $a$ and $b$ are 
of length $n$, if  $\Lambda(a,b) \neq 0$,  then:

\begin{equation}\label{eq:lang-wald}
|\Lambda(a,b) |  \geq  
\frac{C(\epsilon)^n B}{( |b_1| \ldots |b_n| |a_1| \ldots |a_n|)^{1 + \epsilon}}
\end{equation}
\end{myconj}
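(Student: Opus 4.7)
The statement is the Lang-Waldschmidt conjecture, which has been open since 1978, so any honest proof proposal is really a description of a strategy that is known to fail at a specific step with current technology. My plan would start from Theorem \ref{thm:bak-wust} and observe that it already has the correct qualitative shape: a lower bound on $|\Lambda(a,b)|$ consisting of a height-dependent factor times a factor that is polynomial in $B$, matching the structure of (\ref{eq:lang-wald}). The quantitative gap is that the product $h'(a_1)\cdots h'(a_n)$ sits inside the \emph{exponent} of the Baker-W\"{u}stholz bound, so that bound is effectively of the form $B^{-C(n) \prod_i h'(a_i)}$, whereas Lang-Waldschmidt wants the $a_i$'s to appear only as a polynomial denominator $(a_1 \cdots a_n)^{1+\epsilon}$. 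The entire task thus reduces to saving a multiplicative factor of size roughly $\exp(C(n) \prod_i h'(a_i) \log B)$ over the current state of the art.

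First I would set up the standard Baker-method template: construct an auxiliary entire function of the form $F(z_1,\ldots,z_n) = \sum_\lambda p_\lambda(z) \exp(\lambda_1 z_1 \log a_1 + \cdots + \lambda_n z_n \log a_n)$ vanishing to high order at the origin via Dirichlet's box principle, apply a zero estimate of Philippon-Waldschmidt type, and combine with an extrapolation argument to extract a lower bound on $|\Lambda(a,b)|$. The main obstacle -- the step that defeats every current approach -- is the parameter choice in the box-principle construction: the number of monomials needed to force the prescribed vanishing order grows multiplicatively in the heights, so the auxiliary polynomial's coefficients are correspondingly huge, and this loss feeds directly into the exponent of the final bound. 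Eliminating it would require a genuinely new interpolation or transcendence construction that avoids the multiplicative blow-up across the $n$ coordinates, and this is precisely where I would expect to get stuck.

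An alternative route to run in parallel is to derive the conjecture from Baker's refined ABC conjecture mentioned in the introduction. Writing $r := a^b / c^d$ and $\Lambda = \log r$, whenever $r$ is close to $1$ one has $|\Lambda| \asymp |r - 1|$, and $|r - 1|$ can be bounded below by an ABC-style estimate applied to the difference of the numerator and denominator of $r$, whose radical is controlled by the primes dividing the $a_i$ and $c_j$. This trades one open conjecture for another of comparable depth, but it clarifies the content of the statement: morally, Lang-Waldschmidt is an effective, uniform multiplicative version of ABC, and both obstructions live in the same place.
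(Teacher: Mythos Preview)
You correctly identified that this statement is a conjecture, not a theorem: the paper does not prove it and does not claim to. It is stated as Conjecture~\ref{conj:lang-wald} and then \emph{assumed} as a hypothesis in Proposition~\ref{prop:main-comp-prop}(1) to conclude that Problem~2 is in P. There is therefore no ``paper's own proof'' to compare your proposal against.

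Your commentary is accurate as far as it goes. The observation that the Baker--W\"{u}stholz bound already has the right shape but with $\prod_i h'(a_i)$ in the exponent rather than $(\prod_i a_i)^{1+\epsilon}$ in the denominator is exactly the gap the paper highlights between Theorem~\ref{thm:bak-wust} and Conjecture~\ref{conj:lang-wald}. Your second route, deriving Lang--Waldschmidt-type bounds from a refined ABC, is precisely what the paper records after Conjecture~\ref{conj:abc-baker}: Baker shows that his refined ABC implies the bound~(\ref{eq:baker}), which is a slightly weakened version of Lang--Waldschmidt but still strong enough for the P-time conclusion. So your two ``routes'' match the two conjectural hypotheses the paper uses interchangeably, and your diagnosis of where each gets stuck is consistent with the paper's remarks (including the comment that Baker himself, in \cite{Baker2007}, expressed doubt about the full Lang--Waldschmidt conjecture while still endorsing a bound of the form~(\ref{eq:baker})).

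In short: there is no gap in your reasoning to name, because you did not claim a proof; you correctly flagged the statement as open and described the known landscape accurately.
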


We next recall a central conjecture in modern number theory,
namely the {\em ABC  conjecture} (due to Masser and Oesterl\`{e}), and a more recent
refinement of the ABC conjecture, given by Baker.
See, e.g., \cite{baker-wustholz-survey,granville-abc-2002}, for background on the conjecture.
For any integer $m$,  let $N(m) := (\prod_{p | m} p)$, denote the product of all 
distinct prime numbers $p$ that divide $m$ (i.e., without repetition of any prime $p$).

\begin{myconj}[ABC conjecture \cite{Masser-abc,Oesterle-abc}]
\label{conj:abc}
For every $\epsilon > 0$, 
there is a  $K(\epsilon) > 0$,
such that
for any positive integers $a,b,c$,
such that $a+ b = c$, and such that $a,b,c$ are relatively
prime (i.e., such that $GCD(a,b) = 1$),
we have:
$$ c \leq K(\epsilon) N(abc)^{1+\epsilon}.$$ 
\end{myconj}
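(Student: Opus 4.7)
The statement above is the ABC conjecture of Masser and Oesterl\'{e}, which is one of the central open problems in modern number theory. The paper does not set out to prove it; it states the conjecture in order to invoke it (or Baker's refinement of it) as a hypothesis in the complexity-theoretic results that follow. An honest proof proposal must therefore begin by acknowledging that a proof is well beyond the scope of this note and remains out of reach of standard methods after more than three decades of effort.

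If one were to attempt a proof, the most developed lines of attack go through arithmetic geometry. The conjecture is essentially equivalent to Szpiro's conjecture relating the conductor and the minimal discriminant of an elliptic curve, so a natural plan would be to combine modularity of elliptic curves over $\rat$ (in the spirit of Wiles' proof of Fermat's Last Theorem) with uniform upper bounds on Faltings heights of semistable curves whose conductor divides $N(abc)$. Concretely, to each coprime triple $a+b=c$ one attaches the Frey--Hellegouarch curve $y^2 = x(x-a)(x+b)$, whose discriminant is $16(abc)^2$ and whose conductor divides $N(abc)$; a bound of the form $c \leq K(\epsilon) N(abc)^{1+\epsilon}$ then reduces to an effective height bound on this family. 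A very different, and heavily contested, claimed proof is Mochizuki's inter-universal Teichm\"{u}ller theory. At the elementary end of the spectrum one can try to leverage Baker's theory of linear forms in logarithms (Theorem \ref{thm:bak-wust} above), but the best unconditional lower bounds of that shape yield only very weak and far-from-uniform partial substitutes for the conjectured inequality.

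The main obstacle is clear and is the reason the conjecture is still open: the ABC conjecture is a uniform Diophantine statement about sums of coprime integers that encodes, in a remarkably compact way, an enormous amount of fine structure about the prime factorizations of $a$, $b$ and $c$. No presently available technique gives a bound of the form $c \leq K(\epsilon) N(abc)^{1+\epsilon}$ that is uniform in the triple $(a,b,c)$ with an effective $K(\epsilon)$, and any genuine proof would almost certainly bring in machinery far outside the scope of the present note. For that reason the authors (correctly) take the conjecture as a black-box hypothesis, and the contribution of the following sections is to extract complexity-theoretic consequences from it rather than to settle it.
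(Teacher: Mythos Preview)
Your assessment is correct and matches the paper's treatment exactly: the ABC conjecture is stated in the paper solely as background and as a hypothesis (together with Baker's refinement, Conjecture~\ref{conj:abc-baker}, and the Lang--Waldschmidt conjecture), not as something to be proved. No proof is offered or attempted in the paper, so there is nothing further to compare.
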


For any positive integer $m$, let $\omega(m)$ denote the number
of distinct prime numbers that divide $m$.
In \cite{Baker-abc-1998}, Baker put forward several refinements
of the ABC conjecture which make the ``constants'' more explicit.
Among them was the following:

\begin{myconj}[Baker's refinement of the ABC conjecture \cite{Baker-abc-1998}]
\label{conj:abc-baker}
There are absolute constants, $K, K' > 0$, 
such that for any integers $a, b, c$ 
that are relatively prime,
and such that $a + b + c  = 0$,   for any $\epsilon > 0$,
we have\footnote{It is not
obvious that Conjecture \ref{conj:abc-baker} is a refinement
of (i.e., implies) 
the standard ABC Conjecture \ref{conj:abc}, but 
as pointed out in \cite{Baker-abc-1998} this is the case,
the key reason being that 
 for integers $n > 1$, $\omega(n) \in O(\log(n)/\log(\log(n)))$.
Indeed, a little calculation shows that 
Conjecture \ref{conj:abc-baker} implies the ABC conjecture
with $K(\epsilon) \in 2^{2^{O(1/\epsilon)}}$. }:\\
\centerline{$\max (|a|,|b|,|c|) \leq  K' \epsilon^{-K\omega(ab)} N(abc)^{1+\epsilon}. $}
\end{myconj}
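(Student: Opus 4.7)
The statement is an open conjecture, not a theorem, and no complete proof is currently available; what follows sketches the only approach on the table and explains why it falls decisively short.

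The natural line of attack runs through Baker's theory of linear forms in logarithms, applied both archimedeanly and $p$-adically. Suppose without loss of generality that $|c| = \max(|a|,|b|,|c|)$. Let $p_1,\ldots,p_n$ enumerate the primes dividing $ab$ and factor $a = \pm \prod_i p_i^{\alpha_i}$, $b = \pm \prod_i p_i^{\beta_i}$; by coprimality these $p_i$ do not divide $c$, so $n \leq \omega(ab)$. From $a+b+c=0$ one obtains $1-(-b/a) = c/a$, and therefore the linear form
\[
\Lambda := \sum_i (\beta_i - \alpha_i)\log p_i \;-\; \log(-1)
\]
satisfies $|\Lambda| \asymp |c/a|$ whenever $|c|$ is small relative to $|a|$. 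Theorem \ref{thm:bak-wust} then forces a lower bound on $|\Lambda|$ expressed in the $p_i$ and in $B := \max_i|\alpha_i - \beta_i|$, which translates directly into an upper bound on $|c|$ involving $N(ab) = \prod p_i$ and $\log|a|$. Combining this with a parallel $p$-adic argument --- using Yu's $p$-adic analogue of Baker's theorem to bound $v_p(c)$ at the primes $p\mid c$ --- one obtains the current unconditional state of the art (Stewart--Yu), which is roughly of the shape $\log\max(|a|,|b|,|c|) \leq K\cdot N(abc)^{\kappa+\epsilon}$ for some $\kappa < 1$ (the smallest known being $\kappa = 1/3$).

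The main obstacle is immediately visible in the constant $C(n) = 18(n+1)!\,n^{n+1}32^{n+2}\log(2n)$ appearing in Theorem \ref{thm:bak-wust}: its super-exponential growth in the number $n$ of primes involved means that, even in the most favourable use of the method, the linear-forms approach can only deliver bounds of exponential shape in $N(abc)$, never the polynomial shape $N(abc)^{1+\epsilon}$ demanded by Baker's refinement, let alone with the precise $\epsilon^{-K\omega(ab)}$ dependence on the number of primes in $ab$. Closing this exponential-to-polynomial gap would require either a fundamentally sharper lower bound on linear forms in logarithms --- itself a major open problem whose resolution would already imply the standard ABC conjecture --- or an essentially different technique (an arithmetic-geometric or modular approach, for example). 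No such route is currently available, which is precisely why the statement stands as a conjecture rather than a theorem; my proposal therefore terminates at the identification of this obstruction.
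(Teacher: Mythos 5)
The statement is an open conjecture (Baker's 1998 refinement of the ABC conjecture), and the paper accordingly states it without proof, citing \cite{Baker-abc-1998}; your proposal is correct to decline to prove it and to identify the super-exponential dependence on $n$ in the Baker--W\"{u}stholz constant $C(n)$ as the essential obstruction. Your sketch of the linear-forms and $p$-adic (Stewart--Yu) route to the current unconditional bounds is a reasonable and accurate account of the state of the art, so there is nothing to compare against: no proof exists in the paper, and none should.
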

Baker shows in \cite{Baker-abc-1998} that Conjecture
\ref{conj:abc-baker} implies the following 
(slightly weakened) variant of the Lang-Waldschmidt conjecture:

\vspace*{0.04in}

\noindent {\em {\bf Consequence of Conjecture \ref{conj:abc-baker} (see 
\cite{Baker-abc-1998}):} There is some absolute constant $K''$  
such that
for any list of positive integers $a$ and any list 
of non-zero integers $b$, where both lists $a$ and $b$ are 
of length $n$,
if  $\Lambda(a,b) \neq 0$,  then

\begin{equation}\label{eq:baker}
|\Lambda(a,b) |  \geq 
e^{- K'' (\log a_1 + \log a_2 + \ldots + \log a_n) (\log \max_i |b_i|)}
\end{equation}

}

In fact, Baker further shows in \cite{Baker-abc-1998} that
a more general (p-adic) version of the
bound (\ref{eq:baker}) 
implies the  ABC conjecture.
Thus, as noted in \cite{granville-abc-2002},
the ABC conjecture and such 
improved quantitative bounds on linear forms in logarithms 
are intimately related questions.
It is perhaps worth mentioning that in \cite{Baker2007} 
Baker expresses doubt about the stronger Lang-Waldschmidt Conjecture. 
However, he then states
a conjecture implying bound (\ref{eq:baker}), 
which is strong enough for our purposes, and he notes that 
it originates from his refined ABC conjecture in \cite{Baker-abc-1998}.

Let us now explore the intimate connection between Problem 2,
Theorems \ref{thm:bak-wust},
and Conjectures \ref{conj:lang-wald},
 \ref{conj:abc}, and \ref{conj:abc-baker}.
Suppose we want to decide whether  $a^b \geq c^d$.  Clearly, we can first check for equality (in P-time).
If equality does not hold, then our goal is 
to decide $a^b > c^d$, knowing $a^b \neq c^d$.
Equivalently, our goal is to decide whether $a^b/c^d > 1$, given that $a^b/c^d \neq 1$.
Equivalently, we want to decide whether $\log (a^bc^{-d}) > 0$, given that $\log(a^bc^{-d}) \neq 0$.

So, Problem 2 is P-time equivalent to the following problem:
given positive integers $a = \langle a_1,\ldots,a_n \rangle$, 
and  integers $b = \langle b_1,\ldots,b_n \rangle$,  both encoded in binary, 
decide whether
$\Lambda(a,b) > 0$, with the promise that $\Lambda(a,b) \neq 0$.

We can compute an approximation
of the logarithmic form $\Lambda(a,b)$, to within any given desired additive error,
$\epsilon > 0$,  in time polynomial in the encoding size of $a$ and $b$, and in $\log(1/\epsilon)$.
To see this,
we first observe the well known fact that logarithms
of integers can be approximated in P-time (in the standard Turing model).  This is 
of course classic.
Nevertheless we provide a proof, both
for completeness and because most treatments of
the numerical computation of logarithms only consider arithmetic complexity,
rather than complexity in the Turing model.
Recall we use $\log(x)$ to denote the {\em natural} logarithm of $x$,
and use $\log_2(x)$ to denote the $\log$ base 2.

\begin{proposition}
\label{prop:approx-log}
There is an algorithm that,
given a positive integer $a$, encoded in binary, and given
a positive integer $j$, encoded in unary, computes a rational value $v_a$,
such that $$| \log(a) - v_a | < 2^{-j}$$
The algorithm runs in time polynomial
in $j$ and $\log_2(a)$  (in the Turing model).
\end{proposition}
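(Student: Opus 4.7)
The plan is to reduce the problem to approximating the logarithm of a quantity close to $1$, for which a geometrically convergent power series is available, and then to analyze the bit complexity of summing that series in fixed-point arithmetic.

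First I would compute $k := \lfloor \log_2 a \rfloor$ in time $O(\log a)$ and write $a = 2^k \cdot m$ with $m := a/2^k \in [1,2)$, an exact rational with $O(\log a)$-bit numerator and denominator. Then $\log a = k \log 2 + \log m$, so it suffices to approximate $\log 2$ and $\log m$ separately and combine them. Because the $\log 2$ contribution is multiplied by $k \leq \log_2 a$, I would demand each of the two pieces to additive accuracy $2^{-(j+2+\lceil \log_2(k+1)\rceil)}$; this tightening costs only $O(\log \log a)$ extra bits and remains within the polynomial budget required by the proposition.

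For the subroutine that approximates $\log r$ with $r \in [1,2)$ (applied both to $r = m$ and to $r = 2$), I would use the atanh expansion
$$\log r \;=\; 2 \sum_{n=0}^{\infty} \frac{y^{2n+1}}{2n+1}, \qquad y := \frac{r-1}{r+1} \in [0, 1/3].$$
Note $y$ is an exact rational with $O(\log a)$-bit numerator and denominator. The tail after $N$ terms is at most $2 y^{2N+1}/((2N+1)(1-y^2)) \leq 3^{-2N}$, so $N = \Theta(j)$ terms suffice to make the truncation error below half of the desired tolerance. I would then evaluate the truncated sum in a fixed-point representation with $P = \Theta(j + \log a)$ fractional bits, rounding after each multiplication by $y$ and each division by $2n+1$; the per-operation rounding error is $2^{-P}$, so over the $O(N) = O(j)$ operations the accumulated roundoff stays below the other half of the tolerance once $P$ is a sufficiently large polynomial in $j$ and $\log a$. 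Each arithmetic operation on $P$-bit fixed-point rationals is standard and runs in time polynomial in $P$ in the Turing model, so the overall running time is polynomial in $j$ and $\log_2 a$.

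The main obstacle is not any single step but the bookkeeping of error propagation: one must ensure that the series truncation error, the accumulated fixed-point roundoff, and the amplification of the $\log 2$ error by the factor $k$ all stay below the target $2^{-j}$ simultaneously. This is handled by the explicit extra $\lceil \log_2(k+1)\rceil$ bits allocated to $\log 2$ and by taking $P$ slightly larger than $j$, both of which cost only polynomially many bits and polynomially many operations.
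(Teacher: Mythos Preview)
Your proposal is correct and follows essentially the same strategy as the paper: write $\log a = k\log 2 + \log(\text{mantissa near }1)$, approximate each piece by truncating a geometrically convergent power series, and account for the amplification of the $\log 2$ error by the factor $k$. The only technical differences are that the paper normalizes the mantissa to $[1/2,1)$ and uses the Taylor series $\log(1+x)=\sum_{i\ge 1}(-1)^{i+1}x^i/i$ (with $|x|\le 1/2$) together with exact rational arithmetic on the partial sums, whereas you normalize to $[1,2)$, use the faster-converging atanh expansion, and carry out a fixed-point roundoff analysis; neither choice affects the polynomial running-time bound.
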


Proposition \ref{prop:approx-log} is proved in the appendix.  We
can use it to easily prove:

\begin{proposition}
\label{prop:approx-log-form}
Given as input positive integers $a = \langle a_1,\ldots,a_n \rangle$ 
and 
integers $b = \langle b_1,\ldots,b_n \rangle$,  both encoded in
binary,  and given a positive integer $j$ (given in unary),  there is 
an algorithm that runs in P-time
(i.e., in time polynomial in $j$ and in the encoding
size of lists $a$ and $b$) and outputs rational 
value $v_{a,b}$,
such that  $|\Lambda(a,b) - v_{a,b}| < \frac{1}{2^j}$.
\end{proposition}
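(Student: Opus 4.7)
The plan is to reduce the problem to $n$ independent applications of Proposition \ref{prop:approx-log}, one for each logarithm $\log(a_i)$ appearing in $\Lambda(a,b)$, and then take the appropriate integer linear combination.

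First I would choose a target precision $j'$ for the individual logarithm approximations that compensates for the amplification introduced by multiplying by $b_i$. Let $s$ denote the total bit-length of the input, and set
\[
    j' \;:=\; j \,+\, \Bigl\lceil \log_2\bigl( n \cdot \max_i |b_i| \bigr) \Bigr\rceil \,+\, 2.
\]
Since each $b_i$ is given in binary, $\log_2|b_i|$ is bounded by its encoding length, so $j'$ is bounded by a polynomial in $j$ and $s$. In particular, $j'$ can be produced in unary in polynomial time, which is important because Proposition \ref{prop:approx-log} requires its precision parameter to be presented in unary.

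Next, for each $i \in \{1,\ldots,n\}$, invoke Proposition \ref{prop:approx-log} on input $a_i$ with precision parameter $j'$ to obtain a rational $v_{a_i}$ satisfying $|\log(a_i) - v_{a_i}| < 2^{-j'}$. Each call runs in time polynomial in $j'$ and $\log_2(a_i)$, hence polynomial in the overall input size and $j$. I would then output
\[
    v_{a,b} \;:=\; \sum_{i=1}^n b_i \, v_{a_i}.
\]
The rationals $v_{a_i}$ have bit-size polynomial in $j'$ and $\log_2(a_i)$, and the coefficients $b_i$ are part of the input, so forming the sum takes polynomial time and yields a rational of polynomial bit-size.

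Finally, the error analysis is a direct triangle-inequality calculation:
\[
    |\Lambda(a,b) - v_{a,b}| \;\leq\; \sum_{i=1}^n |b_i|\,|\log(a_i) - v_{a_i}| \;<\; \Bigl(\sum_{i=1}^n |b_i|\Bigr) 2^{-j'} \;\leq\; n\,\max_i|b_i|\,2^{-j'} \;<\; 2^{-j},
\]
by the choice of $j'$. There is no real obstacle here beyond bookkeeping: the only point to take care of is that $j'$ remains polynomially bounded so that Proposition \ref{prop:approx-log} can indeed be called in polynomial time with $j'$ written in unary, which is guaranteed because the $b_i$ are part of the binary input.
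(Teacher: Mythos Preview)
Your proposal is correct and takes essentially the same approach as the paper: approximate each $\log(a_i)$ via Proposition~\ref{prop:approx-log} to enough extra bits of precision to absorb the blow-up from the coefficients $b_i$ and the number of terms $n$, then form the integer linear combination and bound the error by the triangle inequality. The only cosmetic difference is that the paper chooses a per-term precision $j + \log_2 |b_i| + \log_2 n$ whereas you use a uniform precision based on $\max_i |b_i|$; both choices yield the same bound and the same polynomial running time.
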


\begin{proof}
Given $a= \langle a_1, \ldots, a_n \rangle$, and given $b = \langle b_1, \ldots, b_n \rangle$,
for each $i=1,\ldots,n$ we will  first compute an approximation $v_{a_i}$ of $\log(a_i)$ such that

$$| \log(a_i) - v_{a_i} | < \frac{1}{2^{j + \log_2(|b_i|) + \log_2(n)}}$$

By Proposition \ref{prop:approx-log} we can compute such a $v_a$ in time polynomial
in the encoding size of the input, $a$ and $b$, and in $j$.

Then we let $v_{a,b} :=   \sum^n_{i=1} b_i v_{a_i}$.  
Finally we observe that this yields:

\begin{eqnarray*}
| \log(\Lambda(a,b)) - v_{a,b} |  & = &   | \sum^n_{i=1}  (b_i \log(a_i)) - b_i v_{a_i} | \\
& \leq & \sum^n_{i=1}  |b_i| |\log(a_i) - v_{a_i}| \\
& \leq & \sum^n_{i=1}  |b_i| 2^{-\log{|b_i|}} 2^{-j -\log_2(n)} \\
& = & \sum^n_{i=1} 2^{-j}/n  =  2^{-j}
\end{eqnarray*}

Thus the rational number $v_{a,b}$, which can be computed in time polynomial in $j$, and in the encoding
size of $a$ and $b$, is the desired additive approximation of $\Lambda(a,b)$.
 \end{proof}

\begin{proposition}\mbox{}
\label{prop:main-comp-prop}
\begin{enumerate}

\item If the ABC conjecture, as refined by Baker (Conjecture \ref{conj:abc-baker}), holds, or if the
Lang-Waldschmidt conjecture (Conjecture \ref{conj:lang-wald}) holds, then
Problem 2 is decidable in P-time.

\item (cf. Shub \cite{Shub-smalefest-93}, Theorem 6)  When $m$ and $n$ are fixed constants, 
Problem 2 is (unconditionally) decidable in P-time.\footnote{As mentioned 
in the introduction,
Shub (Theorem 6 in \cite{Shub-smalefest-93}, page 454) already observed part (2.).  
Shub sketched a proof based on precisely the same ideas as 
ours, but the sketched proof
in \cite{Shub-smalefest-93} mis-states the known lower bounds on linear forms in logarithms.
In fact, the lower bound quoted in \cite{Shub-smalefest-93} is provably 
false, and violates
Dirichlet's approximation theorem.  Namely, \cite{Shub-smalefest-93} 
states that for any positive integers $a_1,\ldots,a_m$, 
and non-zero integers $n_1,\ldots,n_m$,
if $\sum^m_{i=1} n_i \log(a_i) \neq 0$, then $| \sum^m_{i=1} n_i \log(a_i) | > J(\bar{a}) > 0$, 
where $J(\bar{a}) = 2^{-K \cdot m \cdot (\log(\max_i a_i))^{m} \cdot \log \log({\max_i a_i})}$,
for some fixed constant $K$.  Note that this stated lower bound $J(\bar{a})$ 
depends only on $\bar{a}= (a_1,\ldots,a_m)$,
and is independent of the coefficients $n_i$.
However, this is false already for linear forms in two logarithms.  
Consider, e.g., $\log(3)$ and
$\log(5)$, and let $\alpha := \log(3)/\log(5)$.  By  Dirichlet's approximation theorem, for
any real number $\alpha$, and for all $\epsilon > 0$, there is a rational number
$p/q$, such that $| \alpha -  p/q | < \epsilon/q$,  and thus that $|q  \log(3) - p \log(5) | < \epsilon \cdot \log(5)$.
Thus note that we can choose $\epsilon = \epsilon'/\log(5)  > 0$, for an arbitrary $\epsilon' > 0$.
Thus, for all $\epsilon' > 0$,  there exist integers $q$ and $p$ such that $| q \log(3) - p \log(5) | < \epsilon'$.
This contradicts the lower bound on linear forms in logarithms quoted 
by Shub in \cite{Shub-smalefest-93}.
For this reason, we provide a proof here, using the best currently known lower bounds.
}

\end{enumerate}
\end{proposition}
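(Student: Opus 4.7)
The plan is to reduce Problem 2 to deciding the sign of a single nonzero linear form in logarithms and then calibrate the precision of a P-time rational approximation of that form against the available (conjectural or unconditional) lower bound on its absolute value. First I would run the equality test of Proposition \ref{prop:alg1}; if it returns EQUAL then output YES (since $a^b = c^d$ implies $a^b \geq c^d$), so assume $a^b \neq c^d$. Dropping without loss of generality any $a_i = 1$ or $c_j = 1$, I set $\alpha := (a_1,\ldots,a_n,c_1,\ldots,c_m)$, $\beta := (b_1,\ldots,b_n,-d_1,\ldots,-d_m)$, and $N := n+m$, so that $\Lambda(\alpha,\beta) = \log(a^b) - \log(c^d)$ is nonzero and $a^b \geq c^d$ iff $\Lambda(\alpha,\beta) > 0$; all $\alpha_i \geq 2$ and all $\beta_i$ are nonzero integers, as required for Theorem \ref{thm:bak-wust}.

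The key step is: if I can certify $|\Lambda(\alpha,\beta)| \geq 2^{-P(s)}$ for some polynomial $P$ in the input bit-length $s$, then Proposition \ref{prop:approx-log-form}, applied with $j := P(s)+1$, produces in P-time a rational $v$ with $|v - \Lambda(\alpha,\beta)| < 2^{-P(s)-1} \leq |\Lambda(\alpha,\beta)|/2$. Hence $v$ has the same sign as $\Lambda(\alpha,\beta)$ (so in particular $v \neq 0$), and I output YES iff $v > 0$. The proof therefore reduces to exhibiting, in each of the three regimes, a polynomial upper bound on $\log(1/|\Lambda(\alpha,\beta)|)$.

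For part (1) under Conjecture \ref{conj:lang-wald}, fix $\epsilon := 1$; then bound (\ref{eq:lang-wald}) gives $\log(1/|\Lambda|) \leq N\log(1/C(1)) + 2\sum_i \log\alpha_i + 2\sum_i \log|\beta_i|$, which is $O(s)$ since the total bit-length $s$ dominates the sum of the individual bit-lengths. Under Conjecture \ref{conj:abc-baker}, Baker's derived bound (\ref{eq:baker}) gives $\log(1/|\Lambda|) \leq K''\bigl(\sum_i \log\alpha_i\bigr)\bigl(\log\max_i|\beta_i|\bigr) = O(s^2)$. For part (2), $N = n+m$ is a fixed universal constant, so the combinatorial prefactor $C(N) = 18(N+1)!N^{N+1}32^{N+2}\log(2N)$ in Theorem \ref{thm:bak-wust} is an absolute constant; since $h'(\alpha_i) \leq \log\alpha_i + 1 = O(s)$, the product $\prod_{i=1}^N h'(\alpha_i) = O(s^N)$ is polynomial in $s$ (crucially because $N$ is constant), and $\log B = O(s)$, so $\log(1/|\Lambda|) \leq C(N)\prod_i h'(\alpha_i)\log B$ is polynomial in $s$.

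The main obstacle to watch is the factorial prefactor $C(N)$ and the $N$-fold product in Baker--W\"ustholz for part (2): only the hypothesis that $m$ and $n$ are fixed absorbs these into polynomial-in-$s$ expressions. For general $N$, Baker--W\"ustholz alone yields an exponential-in-$s$ upper bound on $\log(1/|\Lambda|)$ and does not suffice, which is exactly why the unconditional statement is confined to fixed $m,n$ while the conjectural Lang--Waldschmidt and Baker-style bounds, being only polynomial in $N$ inside the log, handle arbitrary $m,n$. Beyond Proposition \ref{prop:approx-log-form} and the cited lower bounds, no further ingredients are needed, so the algorithm runs in P-time in each case.
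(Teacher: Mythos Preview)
Your proposal is correct and follows essentially the same approach as the paper: reduce to the sign of a nonzero linear form in logarithms via the equality test, then approximate $\Lambda$ to a precision dictated by the applicable lower bound (Baker--W\"ustholz for fixed $N=m+n$, or the conjectural bounds (\ref{eq:lang-wald}) or (\ref{eq:baker}) for general $N$), so that the sign of the rational approximant determines the sign of $\Lambda$. The only cosmetic difference is that you approximate to within half the lower bound and read off the sign directly, whereas the paper approximates to within the lower bound itself and argues by contradiction in two cases; the content is the same.
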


\begin{proof}
To prove both (1.) and (2.) we simply compute a
sufficiently good additive
approximation to $\Lambda(a,b)$, using Proposition \ref{prop:approx-log-form}, 
and we then 
apply
the ABC conjecture (Conjecture \ref{conj:abc-baker})
and its consequence (\ref{eq:baker}) or the
 Lang-Waldschmidt Conjecture (Conjecture \ref{conj:lang-wald}). 
Likewise, to  obtain (2.), 
after approximating $\Lambda(a,b)$ we apply the
Baker-W\"{u}stholz Theorem (Theorem \ref{thm:bak-wust}).  

In more detail, we
start by proving (2.): we are given lists $a$ and $b$ of length $n$ and lists $c$ and $d$ of length $m$, with
$m$ and $n$ fixed constants.
Let $A :=  \max \{ \max_i a_i,  \max_j c_j \}$.  Let $B :=  \max \{ |b_1|, \ldots, |b_n|, 
|d_1|, \ldots, |d_m|, e \}$.
First we check in P-time if $a^b = c^d$. If this is the case, then we are done.
So assume that $a^b \neq c^d$.
If $a^b c^{-d} \neq 1$, i.e., $\log(a^b c^{-d}) \neq 0$, then
by Theorem \ref{thm:bak-wust} 
we have
\begin{equation}
\label{eq:for-comp-final}
| \log(a^b c^{-d}) | \geq 2^{-K (\prod^{n}_{i=1} \log(a_i)) (\prod^m_{j=1} \log(c_j)) \log B} \geq 
2^{-K (\log A)^{m+n} \log B}
\end{equation}
for some fixed constant $K$ (that depends on $n$ and $m$).
But by Proposition \ref{prop:approx-log-form},  when $m$ and $n$ are fixed constants,
we can compute in time polynomial in the encoding size of $a$, $b$, $c$ and $d$,  a rational
value $v_{a,b,c,d}$ such that 

\begin{equation}
\label{eq:for-comp-2-final}
 | \log(a^b c^{-d}) - v_{a,b,c,d} | < 2^{-K (\log A)^{m+n} \log B} \ .
\end{equation}
Now suppose $v_{a,b,c,d} \geq 0$.   Then if $ \log(a^b c^{-d}) \leq 0$, we would have 
$| \log(a^b c^{-d}) - v_{a,b,c,d} | =   v_{a,b,c,d} +  |  \log(a^b c^{-d})| \geq 
2^{-K (\log A)^{m+n} \log B}$, the last inequality following by 
(\ref{eq:for-comp-final}).  However, this contradicts the inequality 
(\ref{eq:for-comp-2-final}) just given.
Thus if $v_{a,b,c,d} \geq 0$, then it follows that $\log(a^b c^{-d}) > 0$, i.e., 
that $a^b > c^d$.

Similarly, suppose $v_{a,b,c,d} < 0$.  Then if $\log(a^b c^{-d}) \geq 0$, we would
have  $| \log(a^b c^{-d}) - v_{a,b,c,d} | =   |v_{a,b,c,d}| +  |  \log(a^b c^{-d})| \geq
2^{-K (\log A)^{m+n} \log B}$, by inequality (\ref{eq:for-comp-final}). 
However, again, this contradicts (\ref{eq:for-comp-2-final}).
Thus if $v_{a,b,c,d}< 0$ then $\log(a^b c^{-d}) < 0$, i.e., $a^b < c^d$.

\item To prove (1.), e.g., in the
case where we use the consequences of Baker's refined ABC conjecture
(Conjecture \ref{conj:abc-baker}, and in particular the bound (\ref{eq:baker})),  
exactly the same argument goes through if we instead compute an approximation $v_{a,b,c,d}$ 
such that 

$$| \log(a^b c^{-d}) - v_{a,b,c,d} | <  2^{-K'''  ((\sum^{n}_{i=1} \log(a_i)) + (\sum^m_{j=1} \log(c_j))) (\log B)}$$
which again, we know we can do in time polynomial in the encoding size of $a$, $b$, $c$, and $d$.
Similarly, exactly the same argument goes through for proving (1.) based on the Lang-Waldschmidt conjecture.
 \end{proof}

\section{Maximum probability parsing for SCFGs}

\label{app-sec:application-scfg}

We now describe an application to the problem of computing a
{\em maximum probability parse tree} for a given string
on a given arbitrary  {\em stochastic context-free grammar} (SCFG).
For particular classes
of grammars (e.g., those in Chomsky Normal Form) 
there are well known dynamic programming algorithms for this
(based on the CKY parsing algorithm),
which generalize the well-known Viterbi algorithm
for HMMs \cite{ManSch99}. 
For arbitrary SCFGs with $\epsilon$-rules, the problem
is more involved, and there do not exist
good complexity bounds in the literature.

\medskip

\noindent {\bf Definitions and Background for SCFGs.}
An SCFG $G =  (V, \Sigma,  R, S,p)$ consists
of a finite set $V$ of {\em nonterminals}, a 
{\em start} nonterminal $S \in V$,
a finite set $\Sigma$ of  {\em alphabet (terminal)} symbols,
and a finite list of {\em rules}, $R \subset V \times (V \cup \Sigma)^*$,
where each rule $r \in R$ is a pair $(A,\gamma)$,
which we usually denote by $A \rightarrow \gamma$,
where $A \in V$ and $\gamma \in (V \cup \Sigma)^*$.
Finally $p: R \rightarrow \real^+$ maps each rule $r \in R$ to a positive
{\em probability}, $p(r) > 0$.
For computational purposes we assume as usual that the rule probabilities are
rational numbers, given as the ratios of two integers written in binary. 
We often denote a rule $r = (A \rightarrow \gamma)$  together with its probability
by writing $A \stackrel{p(r)}{\longrightarrow} \gamma$.
Note that we allow $\gamma \in (V \cup \Sigma)^*$ to possibly be 
the  empty string, denoted by $\epsilon$.
A rule 
of the form $A {\rightarrow} \epsilon$ is called an {\em $\epsilon$-rule}.
For a rule $r = (A \rightarrow \gamma)$, we 
let $\leftr(r) := A$ and $\rightr(r) := \gamma$. 
We let $R_A = \{ r \in R \mid \leftr(r) = A\}$.
For $A \in V$, let $p(A) = \sum_{r \in R_A} p(r)$.
A SCFG must satisfy that $\forall  A \in V$, $p(A)  \leq 1$.
An SCFG is called {\em proper}
if $\forall A \in V, \; p(A) = 1$.

For a SCFG, $G$, 
strings $\alpha, \beta \in (V \cup \Sigma)^*$, and 
$\pi = r_1 \ldots r_k \in R^*$, we write 
$ \alpha \stackrel{\pi}{\Rightarrow} \beta$
if the leftmost derivation starting from $\alpha$, and  
applying the sequence $\pi$ of rules, derives $\beta$.
We define the probability $p( \alpha \stackrel{\pi}{\Rightarrow} \beta)$
of the derivation to be $p( \alpha \stackrel{\pi}{\Rightarrow} \beta) = 
\prod^k_{i=1} p(r_k)$ if $\alpha \stackrel{\pi}{\Rightarrow} \beta$,
and let $p(\alpha \stackrel{\pi}{\Rightarrow} \beta) = 0$ otherwise.
If $A \stackrel{\pi}{\Rightarrow} w$ for $A \in V$ and $w \in \Sigma^*$,
we say that $\pi$ is a {\em complete} derivation from $A$
and its {\em yield} is $y(\pi) =w$.
There is a natural one-to-one correspondence between the complete (leftmost) derivations
of $w$ starting at $A$ and the {\em parse trees} of $w$
rooted at $A$, and this correspondence preserves 
probabilities.
Recall that a parse tree is a rooted, ordered finite tree,
where every leaf $v$ is labeled with a symbol $l(v) \in \Sigma \cup \{\epsilon\}$,
every internal (non-leaf) node $v$ is labeled with 
a nonterminal $l(v) \in V$ and has an associated rule 
$r(v) \in R_{l(v)}$ whose right-hand side is the
concatenation of the labels of the children of $v$.
The yield of the parse tree is the concatenation of the labels of the leaves.
The probability of the parse tree is the product of the probabilities of the
rules associated with its internal nodes.

For a non-terminal $A$ and a string $w$,
the {\em maximum} parse tree probability for $w$, starting at $A$, is defined 
to be $p^{\max}_{A,w} = \max_{\pi \in R^*}  p(A \stackrel{\pi}{\Rightarrow} w)$.
The {\em total} probability of generating $w$ starting at $A$
is defined by $p_{A,w} = \sum_{\pi \in R^*} p(A \stackrel{\pi}{\Rightarrow} w)$.
Given an SCFG, $G = (V,\Sigma,R,S,p)$, 
and given a string 
$w \in \Sigma^*$,  the goal of maximum probability
parsing is to compute $p^{\max}_{S,w}$ and also to compute (a representation of)
a maximum probability parse tree, i.e., 
$\arg \max_{\pi \in R^*}  p(S \stackrel{\pi}{\Rightarrow} w)$.
In the following we will also use $p^{\max}_{G,w}$ 
to denote the maximum probability $p^{\max}_{S,w}$
of a parse tree for $w$ from the start nonterminal $S$ of $G$.

For general SCFGs, $G$, that have $\epsilon$-rules, 
the maximum probability parse tree of a string $w$ (even just the string $w=\epsilon$)
may have exponential size in the size of the grammar, and the corresponding
maximum probability may need an exponential number of bits when written 
as the ratio of two integers.
The probability can be specified more compactly in PoE 
notation. For any SCFG $G$ and string $w$,
polynomial size (in $|G|$ and $|w|$) always suffices to
represent the maximum parsing probability in PoE notation:
the bases of the expression are (a subset of) the given rule probabilities 
of $G$ and the exponents are the number of occurrences of the rules in
the optimal parse tree;
the numbers of occurrences are at most exponential and thus can be written 
in a polynomial number of bits.

The optimal parse tree can be specified more compactly using a DAG
representation that identifies isomorphic subtrees.
Formally, a {\em parse DAG} is a rooted, ordered DAG $D$ (i.e. the DAG
has a single source, and the children of
every node are ordered), where every sink (leaf) node is
labeled with a symbol $l(v) \in \Sigma \cup \{\epsilon\}$,
every other (non-sink) node $v$ is labeled with 
a nonterminal $l(v) \in V$ and has an associated rule 
$r(v) \in R_{l(v)}$ whose right-hand side is the
concatenation of the labels of the children of $v$.
For every node $v$ we can define inductively in a bottom-up order the
yield and probability of the subDAG $D[v]$ rooted at $v$: 
If $v$ is a leaf then the yield is $l(v)$ and the probability is 1.
If $v$ is an internal node, then the yield of $D[v]$ is the concatenation
of the yields of the children of $v$, and the probability of $D[v]$
is the product of the probability of the rule $r(v)$ and the
probabilities of the subDAGs rooted at the children of $v$.
The parse DAG $D$ corresponds to a parse tree $T$ obtained by 
replicating nodes so that every node other than the root has a unique parent;
the yield and probability of the DAG are equal to the yield and probability of the corresponding tree.
As we shall see, for every SCFG $G$ and string $w$ in the language $L(G)$ of $G$
(i.e., that has non-zero probability), there is a maximum probability 
parse DAG for $w$ of size polynomial in $|G|$ and $|w|$.\footnote{The succinctness of the DAG representation is essential only for derivations of $\epsilon$
from nonterminals. In particular, for every SCFG $G$ and $w \in L(G)$,
there is a maximum probability parse DAG for $w$,
of size polynomial in $|G|$ and $|w|$, which consists of a tree, 
some of whose leaves are replaced by DAGs with yield $\epsilon$.}
Our goal is to construct such a maximum probability parse DAG.

We will say that an SCFG, $G=(V,\Sigma,R,S,p)$ is in 
{\em Simple Normal Form} (SNF) if every nonterminal $A \in V$ belongs to one of the following three types:
\vspace*{-0.06in}
\begin{enumerate}
\item type {\tt L}: every rule $r \in R_A$,  has the form $A \xrightarrow{p(r)} B$, for some $B \in V$.
\item type {\tt Q}: there is a single rule in $R_A$: $A \xrightarrow{1} BC$, 
for some $B, C \in V$.
\item type {\tt T}: there is a single rule in $R_A$:  either 
$A  \xrightarrow{1} \epsilon$, 
or $A  \xrightarrow{1} a$ for some $a \in \Sigma$.
\end{enumerate}

An SCFG is said to be in {\em Chomsky Normal Form} (CNF) if it satisfies 
the following conditions:
\begin{itemize}
\item  The grammar does not contain any $\epsilon$-rule 
except possibly for a rule $S \stackrel{p}{\rightarrow}
\epsilon$
associated with the start nonterminal $S$;   if it does contain 
such a rule, then $S$ does not
appear on the right hand side of any rule in the grammar.

\item Every rule, other than $S \stackrel{p}{\rightarrow} \epsilon$,
is either of the form $A \stackrel{p}{\rightarrow} BC$,
or of the form $A \stackrel{p}{\rightarrow} a$
where $A$, $B$, and $C$ are nonterminals in $V$ and $a \in \Sigma$ is
a terminal symbol.
\end{itemize}

We shall show below that every SCFG can be converted efficiently
in P-time, to an equivalent SCFG that is in SNF form,
where the equivalence also entails a 
probability-preserving bijection between parse trees of strings
in the two grammars.

Unlike SNF form, 
conversion of even an ordinary context-free grammar 
to CNF form does not in general preserve 
a bijection between parse trees of the original grammar
and those of the CNF grammar. 
This is so even when we ignore the additional issue of 
needing to preserve the probability
of corresponding (e.g., maximum probability) parse trees for a given string, 
in the setting of SCFGs. 
 
Furthermore, as shown in \cite{ESY12}, it is not possible in general
to convert an arbitrary SCFG to one in CNF form which preserves
even just the probabilities of generating every terminal string, without having
to introduce irrational rule probabilities even when all
rule probabilities of the original SCFG were rational.
See \cite{ESY12} for a considerably 
involved P-time 
algorithm that converts an SCFG to an {\em approximately} equivalent
CNF form SCFG. Here ``approximately equivalent'' only refers
to preservation of the probability of generating strings up to a given length,
and not to preservation of a correspondence between parse trees
(which is not doable in general),  and thus such a conversion
is not suitable when our goal is to compute, e.g.,  a maximum probability
parse tree for a given string on a given SCFG.

\begin{lemma}(Lemma B.5  of \cite{ESY12})
\label{grammar-sd2nf-lem}
Any SCFG, $G = (V,\Sigma,S,R,p)$, with rational rule probabilities, 
can be converted in linear time to a SCFG, $G' = (V',\Sigma,S,R',p')$, 
in SNF form, such that $V \subseteq V'$,
such that $G'$ has the same set of rational values as rule probabilities
(possibly with some additional rules having probability 1),
and such that for every nonterminal $A \in V$ and string $w \in \Sigma^*$, 
there
is a probability preserving
bijection between parse trees of $w$ rooted at $A$ in $G$
and parse trees of $w$ rooted at $A$ in $G'$.
Moreover, given a parse tree for $w$ rooted at $A$ in $G$
we can easily recover the corresponding parse tree in $G'$,
and vice versa.

\end{lemma}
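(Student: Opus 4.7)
The plan is to construct $G'$ by a uniform syntactic expansion of each rule of $G$ that replaces one rule by a chain of fresh nonterminals whose intermediate transitions all have probability~$1$, so that the bijection between parse trees in $G$ and in $G'$ is forced.

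First I would handle terminals: for every $a \in \Sigma$ that appears on the right-hand side of some rule of $G$, introduce a single fresh type-T nonterminal $T_a$ with unique rule $T_a \xrightarrow{1} a$, and rewrite every right-hand-side occurrence of $a$ as $T_a$. After this step every right-hand side is either empty or consists entirely of nonterminals, and each $T_a$ is of type T. Next, for each original rule $r=(A \xrightarrow{p(r)} \gamma_r)\in R$, with $\gamma_r = Y_1 Y_2 \cdots Y_k$ (each $Y_i$ now a nonterminal), introduce a fresh nonterminal $A_r$, and \emph{replace} the rule $r$ by the type-L rule $A \xrightarrow{p(r)} A_r$. Then give $A_r$ rules depending on $k$: if $k=0$, the type-T rule $A_r \xrightarrow{1} \epsilon$; if $k=1$, the single type-L rule $A_r \xrightarrow{1} Y_1$; if $k\ge 2$, introduce further fresh nonterminals $A_r^{(1)},\ldots,A_r^{(k-2)}$ and use the type-Q rules $A_r \xrightarrow{1} Y_1 A_r^{(1)}$, $A_r^{(j)} \xrightarrow{1} Y_{j+1} A_r^{(j+1)}$ for $1\le j\le k-3$, and $A_r^{(k-2)} \xrightarrow{1} Y_{k-1}Y_k$. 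After this, every original $A\in V$ has only rules of the form $A \xrightarrow{p(r)} A_r$, so $A$ is of type L; every fresh nonterminal is by construction of type L, Q, or T; and the only non-unit rule probabilities in $G'$ are the original $p(r)$, so $G'$ uses exactly the same rational probability values as $G$ plus extra rules of probability $1$. Clearly $V\subseteq V'$, and each original rule contributes $O(k)$ new rules and nonterminals, so the construction runs in linear time.

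Finally I would define the bijection $T \leftrightarrow T'$ explicitly. Going from $T$ (a parse tree in $G$) to $T'$: at every internal node $v$ of $T$ labeled $A'$ with associated rule $r=(A' \xrightarrow{p(r)} Y_1\cdots Y_k)$, splice in the deterministic ``skeleton'' in $G'$ consisting of an $A'$-node with single child $A'_r$, followed by the forced chain through $A'^{(1)}_r,\ldots,A'^{(k-2)}_r$, attaching the subtrees for $Y_1,\ldots,Y_k$ in order as leaves of the chain; each original terminal-labeled leaf $a$ gets lifted under a $T_a$-node whose child is the $\epsilon$/terminal leaf. Going from $T'$ back to $T$: since each $A_r$, each $A_r^{(j)}$, and each $T_a$ has a \emph{unique} rule, the skeleton under every original-nonterminal node in $T'$ is forced and can be contracted to recover a unique tree in $G$. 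The two maps are mutual inverses by inspection, and because the only non-unit probability along each skeleton is the original $p(r)$, the product of rule probabilities along the skeleton equals $p(r)$, so probabilities are preserved exactly.

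The main delicate point is just the case analysis on $k\in\{0,1,\ge 2\}$: I need to check that the fresh nonterminals fall into exactly one of the types L, Q, T (in particular that the single rule imposed on $A_r$ in each case matches the correct type), and that $\epsilon$-rules are localized entirely inside the type-T nonterminals $A_r$ with $k=0$, so that the inverse projection on $G'$-parse trees is well-defined even when the derivation contains many $\epsilon$-derivations. Everything else — the linear time bound, $V \subseteq V'$, preservation of the set of rule-probability values, and the probability-preserving bijection — then follows immediately from the construction.
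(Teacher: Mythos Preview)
Your proposal is correct and follows the same approach the paper sketches (and attributes in full to \cite{ESY12}): introduce auxiliary nonterminals with probability-$1$ rules to linearize each right-hand side, together with fresh terminal-generating nonterminals $T_a$, so that the expansion under every original nonterminal is deterministic and hence the correspondence on parse trees is a forced bijection. Your write-up is in fact more explicit than the paper's brief sketch---in particular you spell out the $A\xrightarrow{p(r)}A_r$ layer that forces every original nonterminal into type~{\tt L}, which the sketch glosses over; the only cosmetic glitch is that your $k\ge 2$ chain does not degenerate cleanly when $k=2$ (there you simply want the single rule $A_r\xrightarrow{1}Y_1Y_2$), but the intended construction is clear.
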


The proof is directly analogous to  
related proofs in \cite{ESY12},  
and simply
involves adding new auxiliary nonterminals  and 
new auxiliary rules, each  having probability 1,  
to suitably ``abbreviate'' the sequences of symbols, $\gamma$,
that appear 
on the right hand side (RHS) of rules 
$A \stackrel{p}{\rightarrow} \gamma$, whenever $|\gamma| \geq 3$.
We do this repeatedly until for all such RHSs, $\gamma$, we 
have $|\gamma| \leq 2$.   
To obtain the normal form,
we may then also need to introduce nonterminals that generate 
a single terminal symbol  with probability 1.
The resulting SCFG is guaranteed to
be at most polynomially larger (in fact, only linearly larger)
if we are careful.

We give an efficient algorithm {\em operating in
the unit-cost arithmetic RAM model with only multiplication ($\{ * \}$) operations and comparisons permitted}, for computing the
maximum likelihood parse tree of a given string.

\begin{theorem}
\label{thm:main-unit-cost-max-alg}
Given any SCFG, $G$, in SNF form,
with rational rule probabilities,\footnote{We can even allow 
irrational rule probabilities, since for this
we assume a unit-cost RAM model of computation.}
given a string $w \in \Sigma^*$, 
there is an algorithm that computes the maximum parse tree probability
$p^{\max}_{G,w}$, and
if $p^{\max}_{G,w} > 0$, then it also  
computes a succinct DAG representation of 
a maximum probability parse tree, $t^{\max}_w$, for $w$.

The algorithm runs in polynomial time {\em in the unit-cost
arithmetic RAM model of computation} 
{\em with \underline{only} multiplication operations (and comparisons) allowed.}
And thus, it is P-time Turing reducible to Problem 2: inequality
comparison of
integers in succinct PoE representation.
\end{theorem}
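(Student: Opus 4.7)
The plan is to run a CYK-style dynamic program in the max-product semiring, using SNF form to reduce every step to a constant-size rule, and to handle the two sources of unboundedness in parse trees — derivations of $\epsilon$ from nonterminals, and chains of type {\tt L} (unit) rules or type {\tt Q} rules with one $\epsilon$-deriving child — by solving two auxiliary max-probability shortest-path problems. For a string $w = w_1 \ldots w_n$, I define $M[A,i,j]$ for $0 \le i \le j \le n$ to be the maximum probability of a parse tree rooted at $A$ with yield $w_{i+1} \ldots w_j$; in particular $M[A,i,i]$ is the maximum probability of deriving $\epsilon$ from $A$, which I denote by $M_\epsilon[A]$. The target is $p^{\max}_{G,w} = M[S,0,n]$ together with backpointers from which a succinct parse DAG can be extracted.

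First I would compute $M_\epsilon[A]$ for every $A \in V$ by solving a max-product AND-OR shortest-hyperpath problem whose constraints are: a type {\tt T} rule $A \xrightarrow{1} \epsilon$ contributes $M_\epsilon[A] \ge 1$; a type {\tt L} rule $A \xrightarrow{p} B$ contributes $M_\epsilon[A] \ge p \cdot M_\epsilon[B]$; and the type {\tt Q} rule $A \xrightarrow{1} BC$ contributes $M_\epsilon[A] \ge M_\epsilon[B] \cdot M_\epsilon[C]$. Knuth's generalization of Dijkstra to AND-OR hyperpaths, carried out in the max-product semiring, solves this in time polynomial in $|G|$ using only multiplications and comparisons, and it naturally records for each $A$ a single ``best'' rule witnessing $M_\epsilon[A]$; unioning these witnesses gives a shared parse DAG of size $O(|V|)$ that produces $\epsilon$ from every nonterminal attaining its optimum. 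Next I iterate over intervals $(i,j)$ with $i < j$ in order of increasing $j-i$, and for each interval first set initial values from rules whose sub-derivations use \emph{strictly} shorter intervals:
\[
M_0[A,i,j] \;=\; \max\!\bigl(\mathbf{1}[A \xrightarrow{1} w_{i+1}\ \text{and}\ j=i+1],\ \max_{\substack{A \to BC \\ i < k < j}} M[B,i,k] \cdot M[C,k,j]\bigr),
\]
so that every factor on the right has already been computed in a previous outer iteration.

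The remaining ``same-interval'' contributions — type {\tt L} rules $A \xrightarrow{p} B$, and type {\tt Q} rules $A \to BC$ in which one child derives $\epsilon$ — form a max-product shortest-path system in the unknowns $\{M[\cdot,i,j]\}$:
\[
M[A,i,j] \;=\; \max\!\Big(M_0[A,i,j],\ \max_{A \xrightarrow{p} B} p \cdot M[B,i,j],\ \max_{A \to BC} M_\epsilon[B] \cdot M[C,i,j],\ \max_{A \to BC} M[B,i,j] \cdot M_\epsilon[C]\Big),
\]
whose coefficients $p$, $M_\epsilon[B]$, $M_\epsilon[C]$ are already-computed constants in $(0,1]$. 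A Dijkstra-style max-product relaxation solves this in $\mathrm{poly}(|G|)$ multiplications and comparisons per interval, and since there are only $O(n^2)$ intervals the total work is polynomial. To produce the succinct parse DAG when $p^{\max}_{G,w}>0$, I store with each $M[A,i,j]$ a backpointer recording the achieving rule, the split index (if any), and whether any factor was drawn from the $\epsilon$-table; tracing these from $(S,0,n)$ yields a DAG whose non-$\epsilon$ internal nodes are indexed by triples $(A,i,j)$ with $i<j$, hence at most $|V|\cdot n^2$ in number, while every $\epsilon$-subDAG reuses the $O(|V|)$-size DAG from the first step. Every arithmetic operation performed is a multiplication or a comparison of products of rule probabilities, so the algorithm runs in polynomial time in the unit-cost $\{*\}$-only arithmetic RAM model, and in the Turing model it thereby reduces in polynomial time to Problem 2. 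The main obstacle is the entanglement of $\epsilon$-derivations with same-interval type {\tt L} cycles: if one tries to unroll either of them directly the parse tree can blow up doubly-exponentially, which is precisely why the two auxiliary max-product shortest-path computations (and the DAG sharing of $\epsilon$-subtrees) are needed.
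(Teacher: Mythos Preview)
Your proposal is correct and follows essentially the same three-stage approach as the paper: Knuth's generalized Dijkstra for the max-probability $\epsilon$-derivations $M_\epsilon[A]$, a max-product shortest-path step to absorb chains of type {\tt L} rules and type {\tt Q} rules with one $\epsilon$-deriving child, and a CYK-style dynamic program over intervals of $w$. The only cosmetic difference is that the paper precomputes the all-pairs unit-chain closure $p^{A}_{\max,B}$ once on the graph $H$ and then sets $p^{A}_{\max,i,j} = \max_{B} p^{A}_{\max,B} \cdot q^{B}_{\max,i,j}$ (where their $q^{B}_{\max,i,j}$ is exactly your $M_0[B,i,j]$), whereas you fold this closure into a per-interval single-sink Dijkstra with the $M_0$ values as terminal rewards; both organizations are polynomial and produce the same table and DAG.
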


\begin{proof}
Given the SCFG, $G$, we first compute, for every non-terminal $A$,
the maximum probability $p^{\max}_{A,\epsilon}$ of any finite parse tree
that starts at nonterminal $A$ and generates the empty string $\epsilon$.

We do this using a variant of Dijkstra's shortest path algorithm,
due to Knuth \cite{Knuth77}, 
which works not on finite graphs but on weighted context-free grammars,
for generating a parse tree with smallest {\em sum} total weight
(as well as other classes of weight functions).
See the survey on probabilistic parsing 
by Nederhof and Satta \cite{NedSat08b} (their Figure 5) where
they nicely describe Knuth's
algorithm applied to the specific problem of computing, for any given SCFG,
the maximum probability of any finite
parse tree.
We follow  Nederhof and Satta's 
description (\cite{NedSat08b}).

Given the original SCFG, $G$, and given a nonterminal $A$,
if we are interested in computing $p^{\max}_{A,\epsilon}$, we  
first remove from $G$ all rules of the form $B \rightarrow \gamma$,
for any nonterminal $B$, and string $\gamma$
where $\gamma$ contains at least one terminal symbol 
$\sigma \in \Sigma$ in it, because such rules can't possibly help us
generate the empty string $\epsilon$.

Let us call the resulting SCFG after these removals $G'$.\footnote{Some
non-terminals $A_i$ in $G'$ may now not have a set of rules $R_i$
associated with them
whose probabilities sum to 1, because
we have removed some rules.  This causes no problem in 
our computations: we are interested in probabilities of parse trees
that don't involve the removed rules, and these remain the same as
in $G$.}

Every finite parse tree of the remaining SCFG, $G'$,
must generate the empty string, because no other terminal symbols
are left.  Moreover, there is a one-to-one probability
preserving correspondence
between parse trees of $G$ generating $\epsilon$ and parse
trees of $G'$,  starting at any nonterminal $A$ in the two
SCFGs, respectively. 

Therefore, computing $p^{\max}_{A,\epsilon}$ is equivalent to computing
the maximum probability of {\em any} finite parse tree starting
at $A$ in $G'$.  Let us denote this probability by 
$p^A_{G',\max}$.

Knuth's variant of Dijkstra's algorithm is 
described in Figure \ref{fig:knuth-alg} which is
taken from \cite{NedSat08b} 
(their Figure 5).\footnote{\label{footnote:log-transform} We note, for clarity, that ``Dijkstra's algorithm'' is usually viewed as 
a single-source {\em shortest path} algorithm on a edge-weighted directed graph,
i.e., an algorithm that finds a path from a given source $s$ to
targets $t$ which {\em minimizes} the {\em sum} of the 
non-negative edge weights along the path.
And Knuth's variant can also be viewed as computing the minimum 
{\em sum} total weight of all
rules in a finite parse tree starting from a given non-terminal.
However,   
a well-known and straight-forward transformation
shows that Dijkstra's algorithm (and Knuth's algorithm) can also be used to compute a {\em maximum}
probability path from a source $s$ to a target $t$
(respectively, a maximum
probability parse tree starting at a given non-terminal).  Namely, maximizing
the {\em product} of probabilities labeling edges along 
a path from $s$ to $t$ is equivalent to minimizing the length of a path
from $s$ to $t$, when every edge having probability $p > 0$ 
in the original graph
is assigned the non-negative weight $- \log p$ in the transformed graph.
And the same transformation also works for Knuth's variant
of Dijkstra's algorithm for weighted CFGs, given
in Figure \ref{fig:knuth-alg}.
For establishing the polynomial running time  
of these algorithm in the {\em unit-cost} {\em (exact) arithmetic} model
of computation, it is convenient  
to view Knuth's (and Dijkstra's) algorithm
in their multiplicative form, because this avoids the need to consider
approximations of $-\log p$, for given rational rule probabilities
$p$.   However, when we operate in the standard Turing model of computation,
as in the proof of Corollary \ref{approx} for {\em approximating}
in P-time the maximum probability of a 
parse tree,  we indeed use the $\log$-transformed (minimization)
variants of these same algorithms, by first approximating the 
non-negative edge weights $-\log p$.}
It  computes $p^B_{G',\max} $
for every nonterminal $B$ of a given SCFG $G'$,
until it has computed $p^A_{G',\max}$, 
or else discovered that $p^A_{G',\max} = 0$.
It does so by iteratively finding a 
non-terminal $B$ for which $p^{B}_{G',\max}$ has not yet been computed,
and such that, among all such nonterminals, $B$ gives rise to the maximum
product probability of a rule $B \rightarrow \gamma$
times the maximum parse tree probabilities $p^{B'}_{G',\max}$
for all nonterminal occurrences $B'$ in $\gamma$.

\begin{figure}

\noindent \begin{algorithmic}

\State $D :=  \Sigma \cup \{ \epsilon \}$;
\State Initialize: $p^A_{G',\max} := 0$, for every nonterminal $A$;
\State  Define $p^{a}_{G',\max} := 1$ for all $a \in D$.
\While{($F :=  \{ B \mid  B \not\in D \ \wedge \ \exists r = 
(B \rightarrow X_1 \ldots X_m)  \ \mbox{where} \ X_1, \ldots, X_m\in D \} 
\neq \emptyset$)}

\ForAll{$B \in F$} 
\State  $q(B) := \max_{\{r = (B \rightarrow X_1 \ldots X_k) \mid
X_1, \ldots, X_k \in D\}} p(r)  
p^{X_1}_{G',\max} \ldots p^{X_k}_{G',\max}$
\State $m(B) := \arg\max_{\{r = (B \rightarrow X_1 \ldots X_k) \mid
X_1, \ldots, X_k \in D\}} p(r)  
p^{X_1}_{G',\max} \ldots p^{X_k}_{G',\max}$.
\EndFor
\State Let $C := \arg\max_{C \in F} q(C)$;
\State $p^{C}_{G',\max} :=  q(C)$;
\State  $D := D \cup \{ C \}$;
\State
       $t^C_{G',\max} := m(C)$; 
\State (where $m(C)$ only describes the root rule
of tree $t^C_{G',\max}$, and 
$t^C_{G',\max}$ can thus be 
\State \ \ viewed as being encoded succinctly as a straight-line program,
i.e., a DAG.)
\EndWhile
\State 
For every nonterminal $A$:  output  $p^A_{G',\max}$, and 
if $p^A_{G',\max} > 0$ then also output $t^A_{G',\max}$;

\end{algorithmic}

\caption{\label{fig:knuth-alg} Knuth's variant of Dijkstra's algorithm, for 
computing a globally maximum probability
parse tree 
starting from every
nonterminal $A$  (see Nederhof \& Satta \cite{NedSat08b}).}
\end{figure}

It is not difficult to check that 
this algorithm works correctly, for the
same reason that Dijkstra's algorithm works correctly.
It computes $p^{A}_{G',\max} = p^{\max}_{A,\epsilon}$ 
and furthermore also computes a DAG
representation (straight-line program representation),
of a maximum probability parse tree $t^A_{G,\max,\epsilon}$ starting
at $A$,
for every nonterminal
 $A$ of both $G'$ and $G$.
(Note that this algorithm does not require the SCFG $G$ to 
be in SNF form.  We will exploit SNF form only later, in the
final dynamic programming step of the algorithm.) 

Note, furthermore, that the 
algorithm requires at most a polynomial number of arithmetic
operations, specifically, it requires 
\underline{\em multiplications only} (and this 
fact will be important for us later), as well as comparison operations 
(for allowing us to take the maximum over a finite set of values).

Thus, the algorithm clearly runs in polynomial time in the
unit-cost arithmetic RAM model of computation.

However, note that the probabilities $p^{\max}_{A,\epsilon}$ can 
clearly be extremely small positive numbers in the worst case (as small
as $1/2^{2^{|G|}}$), and thus we can not encode them in standard binary
notation in polynomial size.
However, since only multiplications are being used over a set of
input rule probabilities, $p(r)$,  we can encode these probabilities
succinctly in PoE, by specifying (in binary) the non-negative
integer power of each $p(r)$.
We can compute these numbers with the same
number of arithmetic operations over PoE notation, as long as we can compare 
PoE numbers 
efficiently.

Having computed $p^{\max}_{A,\epsilon}$ for every nonterminal $A$ of $G$,
we next want to use these quantities to compute $p^{\max}_{A,w}$ for
an arbitrary string $w \in \Sigma^*$.

In the next step, we will use another application of Dijkstra's
algorithm, this time on a finite graph, to compute,
for every pair of nonterminals $A,B$,
the maximum probability, $p^A_{\max,B}$, that 
a derivation of $G$ starting at nonterminal $A$ eventually yields the single 
nonterminal symbol $B$ as its yield.
We do this as follows:  
construct an edge-labeled directed graph, $H = (V, E)$,
with edges labeled by probabilities, 
with the nonterminals $V$ of $G$ as its nodes,
and such that, for 
each rule $A \stackrel{p}{\rightarrow} B$,
we have the corresponding edge $(A,p,B) \in E$.
For each rule $A \stackrel{p(r)}{\rightarrow} B C$ such that 
$p^{\max}_{C, \epsilon} > 0$,
we place the edge $(A,p',B) \in E$,
where $p' = p(r) * p^{\max}_{C,\epsilon}$. 
Similarly if $p^{\max}_{B,\epsilon} > 0$, 
then we put $(A,p',C) \in E$ where
$p' = p(r) * p^{\max}_{B,\epsilon}$.
For every pair of nodes $A,B$, if there are multiple edges 
$(A,p,B)$ and $(A,p',B)$ in $E$, we only
keep one edge with the maximum probability.

We then 
run Dijkstra's algorithm from every node 
$A$ to compute the maximum probability path
from $A$ to every other node $B$ in $H$,
and we let $p^A_{\max,B}$ denote the product of the probabilities
along that path.   
Note that in this case again, Dijkstra's algorithm only
requires polynomially many {\em multiplication operations}
(no additions) and comparisons, and thus runs in P-time
in the unit-cost RAM model of computation, whatever
the encoding of the probabilities labeling $H$ 
is.
While running Dijkstra's algorithm we can also retain
the maximum probability paths themselves, and combine
them with the already computed maximum probability
parse trees $t^{B}_{\max,\epsilon}$ encountered along the
path, in order to compute a DAG representation of a maximum 
probability parse tree $t^{\max}_{A,B}$ in $G$ with
root $A$ and with ``yield'' $B$.
The following claim is not difficult to prove:

\noindent{\bf Claim:} {\em This algorithm correctly
computes  the maximum probability
$p^A_{\max,B}$  of a parse tree in $G$ with root $A$ and yield $B$,
and also computes a succinct DAG representation $t^{\max}_{A,B}$ of 
such a parse tree.}

We are finally ready to iteratively compute the probabilities
$p^{\max}_{A,w}$ for an arbitrary string $w \in \Sigma^+$.
Let $w= w_1 \ldots w_n$ be the given string, with $w_i \in \Sigma$.
For $i \in \{1,\ldots,n\}$, and $j \in \{1,\ldots, n-i+1\}$,
let us define $p^{A}_{\max,i,j}$ to be the maximum probability
of any parse tree rooted in $A$ which generates the string 
$w_i \ldots w_{i+j-1}$.  
We shall now see how to compute these
$p^{A}_{\max,i,j}$'s via dynamic programming. 

Recall that we are assuming that $G$ is in SNF form.
For $j = 1$, let $q^{A}_{\max,i,1} :=  
\max_{r= (A \stackrel{p(r)}{\rightarrow} w_i)} p(r)$.
In other words, $q^A_{\max,i,1}$ is simply the maximum probability
of any rule associated with $A$ that immediately generates 
the terminal symbol $w_i$.
By default, if there is no such rule then $q^A_{\max,i,1} :=0$.
For $j > 1$,
let $q^{A}_{\max,i,j}$ denote the maximum probability
of a parse tree rooted in $A$ which generates the string
$w_i \ldots w_{i+j-1}$, {\em and furthermore} where the 
rule used at the root of the parse tree is of the form
$A \rightarrow B C$, where the yield of {\em both} the children
$B$ and $C$ are not $\epsilon$.

Figure  \ref{fig:max-dyn-prog} describes a dynamic programming
algorithm for computing both $p^A_{\max,i,j}$'s and $q^{A}_{\max,i,j}$'s
for all $i$ and $j$,
based on mutual recurrence relations.
The algorithm assumes that for all
nonterminals $A$ and $B$ the values $p^A_{\max,\epsilon}$ and
$p^A_{\max,B}$ have already been computed, as described
in the previous steps.

\begin{figure}
\begin{algorithmic}
\State {\em Initialization:}
\ForAll{nonterminals $A \in V$ and $i \in \{1,\ldots,n\}$}
\State $q^{A}_{\max,i,1} :=  
\max_{r= (A \stackrel{p(r)}{\rightarrow} w_i)} p(r)$;
\EndFor
\ForAll{nonterminals $A \in V$, and $i \in \{1,\ldots,n\}$}
\State $p^A_{\max,i,1} := \max_{B \in V}  p^{A}_{\max,B} \cdot q^B_{\max,i,1}$;
\EndFor
\State  {\em Main Loop:}
\For{$j :=  2, \ldots, n$}
\ForAll{nonterminals $A \in V$ and $i \in \{1,\ldots,n\}$}
\If{($i+j-1 \leq n$)}
\State $q^A_{\max,i,j} =
            \max_{\{m \in \{1,\ldots,j-1\} \ \mbox{\& rule}  \ r = (A \rightarrow B C) \}}
                p(r) \cdot  p^{B}_{\max,i,m} \cdot p^{C}_{\max,i+m,j-m}$
\EndIf
\EndFor
\ForAll{nonterminals $A \in V$ and $i \in \{1,\ldots,n\}$}
\If{($i+j-1 \leq n$)}
\State $p^A_{\max,i,j} = \max_{B \in V}   p^A_{\max,B} \cdot q^B_{\max,i,j}$
\EndIf
\EndFor
\EndFor
\end{algorithmic}
\caption{\label{fig:max-dyn-prog} Dynamic program
for computing maximum probability parse of given string.}
\end{figure}

It is not difficult to show that the algorithm is correct.
Note that $p^{A}_{\max,1,n} = p^A_{\max,w}$, and thus
the algorithm computes the maximum probability of a parse tree
for a string $w$. 

Note furthermore that the algorithm runs in polynomially
many steps, only requiring unit-cost {\em multiplication} operations
and comparison operations.
Furthermore, we can easily augment the algorithm so that, 
in addition to just computing $p^A_{\max,w}$ it also
computes a DAG (a straight-line program) representation
of a maximum probability parse tree $t^A_{\max,w}$ for $w$ rooted at $A$,
while still requiring only polynomially many operations in total.
This completes the proof.

It is worth mentioning that one could alternatively
use Knuth's algorithm in a slightly different way
in order to compute 
the maximum probability $p^{\max}_{G,w}$ of a parse tree
for a string $w$, in polynomial time in the unit-cost model.
Namely, we can view the string $w$, where $|w|=n$ as being described by the
obvious ``linear'' deterministic finite automaton,  $D_w$, 
with $n+1$ states, start state $s_0$, and final state $s_n$,
such that $D_w$  accepts just the single string $w$, i.e., $L(D_w) = \{ w \}$.
Then, given the SCFG, $G$, which we can assume wlog
is already in SNF form, and given the DFA, $D_w$,  we can 
use a standard ``product/intersection'' construction (see, e.g., \cite{NedSat08b}) 
to form
a new {\em weighted} CFG, $G'$, which has size polynomial in $G$ and $D_w$.
The non-terminals of $G'$ are given by triples of the form $(s,A,s')$,
where $s$ and $s'$ are states of $D_w$, and $A$ is a nonterminal of $G$.
The rules of $G'$ are formed as follows: for
every rule $A \stackrel{p}{\rightarrow} B C$ in $G$, 
we add the following rules to $G'$:  $(s,A,s') \stackrel{p}{\rightarrow}
(s,B,s'') (s'',C,s')$, for every state $s''$ of $D_w$.
For every rule of the form $A \stackrel{p}{\rightarrow} B$,
we add the rule  $(s,A,s') \stackrel{p}{\rightarrow} (s,B,s')$. 
For every rule of the form $A \stackrel{p}{\rightarrow}
a$ for some terminal symbol $a$,  we add the rule $(s,A,s') 
\stackrel{p}{\rightarrow} a$ to $G'$ if and only if 
there is a transition $(s,a,s')$ in $D_w$.
We also need to be careful with handling $\epsilon$-rules.
If $A \stackrel{p}{\rightarrow} \epsilon$ is a rule of $G$,
then we make $(s,A,s) \stackrel{p}{\rightarrow} \epsilon$ a rule of
$G'$ for every state $s$ of $D_w$.
This standard product construction has the property that,
for every non-terminal $A$ of $G$, there is a easily computable
weight-preserving (i.e., probability-preserving) 
one-to-one correspondence between the 
finite parse trees of $G$ rooted at $A$ which generate the string $w$,
and all the finite parse trees of $G'$ rooted at $(s_0,A,s_n)$.
Thus, in order to compute $p^{\max}_{G,w}$, 
we can alternatively apply Knuth's algorithm directly to $G'$ 
in order to compute the maximum probability of any finite parse tree rooted at 
$(s_0,A,s_n)$.
 \end{proof}

\begin{corollary}
Given any SCFG, $G$, with rational rule probabilities,
and given a string, $w \in \Sigma^*$, where $\Sigma$ is the terminal alphabet
of $G$:  
\begin{itemize}
\item[A.] If  either the Lang-Waldschmidt Conjecture,
Conjecture \ref{conj:lang-wald}, or 
Baker's  version of the ABC conjecture, Conjecture 
\ref{conj:abc-baker}, hold,

\item[B.] or else, if the number of distinct probabilities 
labeling the rules of $G$
is bounded by a fixed constant, $c$,
\end{itemize}
then the following all hold:

\begin{enumerate}
\item 
There is a P-time algorithm,
in the standard Turing model of computation, for computing the exact
probability $p^{\max}_{G,w}$ 
in
{\em succinct product of exponentials notation} (PoE),
and there is a P-time algorithm for 
computing, if $p^{\max}_{G,w} > 0$, 
a maximum probability parse tree $t^{\max}_w$ 
where $t^{\max}_w$ is represented succinctly as a DAG 
(straight-line program).

\item Given, additionally, a  rational probability $q \in (0,1]$,
encoded in binary\footnote{If we are assuming (A.), then 
$q$ can even be given in PoE. If we are instead assuming (B.), then 
$q$ can also be given in PoE by $a^b$, but with $a = \langle a_1,\ldots,a_k\rangle$, where $k \leq c'$, for some fixed constant $c'$.},
there is a P-time algorithm in the standard Turing model of computation
that  decides whether  $p^{\max}_{G,w} \geq q$. 

\item Likewise, given additionally another string $w' \in \Sigma^*$, 
there is a P-time algorithm (in the Turing model), that
decides whether
$p^{\max}_{G,w} \geq p^{\max}_{G,w'}$.
\end{enumerate}
\end{corollary}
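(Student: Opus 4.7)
The plan is to combine Theorem \ref{thm:main-unit-cost-max-alg} with Proposition \ref{prop:main-comp-prop} by carefully simulating the unit-cost arithmetic algorithm in the Turing model, using PoE representations for all intermediate rational values. Specifically, the algorithm from Theorem \ref{thm:main-unit-cost-max-alg} (including Knuth's algorithm on the grammar $G'$, Dijkstra's algorithm on the graph $H$, and the final dynamic program) only ever performs two kinds of operations on probabilities: multiplications and comparisons. No additions, subtractions, or other operations on probabilities are ever needed. The idea is to carry out each multiplication symbolically in PoE, and to discharge each comparison using Problem 2.

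First I would argue that the PoE representations remain of polynomial size throughout the computation. Initially, every rule probability $p(r) = u_r/v_r$ is a single rational with numerator and denominator of bounded bit-size. We can encode each positive rational encountered in the algorithm as a PoE $\prod_i a_i^{b_i} / \prod_j c_j^{d_j}$, where the base integers $a_i, c_j$ are drawn from the fixed set $\{u_r, v_r : r \in R\}$ and the exponents $b_i, d_j$ are non-negative integers in binary. When we multiply two such expressions, we simply concatenate (or merge) the base lists and add the corresponding exponents; thus a multiplication of two PoE values with exponent bit-length at most $L$ produces a PoE value with exponent bit-length at most $L+1$. Since the algorithm performs only $\mathrm{poly}(|G|,|w|)$ multiplications, all exponents remain of polynomial bit-length and all PoE encodings stay of polynomial total size. (By rearrangement, as noted earlier in the paper, each such rational PoE can be re-expressed as a ratio of two PoE integers.)

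Next, each comparison the algorithm performs is a comparison between two rational numbers given in PoE, which by clearing denominators reduces in polynomial time to a single instance of Problem 2 on positive integers in PoE. Under hypothesis (A), Proposition \ref{prop:main-comp-prop}(1) tells us that each such comparison is decidable in P-time in the Turing model. Under hypothesis (B), observe that in every PoE value produced during the run, the set of base integers is a subset of the (at most $2c$) numerators and denominators of the $c$ distinct rule probabilities. Hence the number $n+m$ of distinct bases in each Problem 2 instance that arises is bounded by the fixed constant $2c$, and Proposition \ref{prop:main-comp-prop}(2) makes each comparison decidable in P-time in the Turing model. Simulating the entire unit-cost algorithm in this way establishes part~(1) of the Corollary, producing both the PoE value of $p^{\max}_{G,w}$ and, via the straight-line/DAG bookkeeping already built into the proof of Theorem \ref{thm:main-unit-cost-max-alg}, a succinct DAG for a maximum probability parse tree.

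For parts (2) and (3), once $p^{\max}_{G,w}$ (and possibly $p^{\max}_{G,w'}$) has been computed as a PoE expression, the desired inequality tests $p^{\max}_{G,w} \geq q$ and $p^{\max}_{G,w} \geq p^{\max}_{G,w'}$ are themselves single instances of Problem 2 on PoE-represented positive integers (obtained again by clearing denominators). Under assumption (A) this is directly in P-time by Proposition \ref{prop:main-comp-prop}(1); under assumption (B) the base sets of all PoE expressions involved are drawn from a fixed constant-size pool (the finitely many rule-probability numerators/denominators, plus those appearing in $q$ if it is supplied in PoE with a constant-bounded number of bases), so the number of distinct bases remains a fixed constant and Proposition \ref{prop:main-comp-prop}(2) applies. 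The main obstacle in the write-up is not conceptual but careful accounting: one must verify that the exponent bit-lengths really do grow only additively under the operations performed by the unit-cost algorithm, and that under (B) the set of bases in every intermediate PoE genuinely stays within the fixed constant-size pool, so that the ``fixed $m$ and $n$'' clause of Proposition \ref{prop:main-comp-prop}(2) is legitimately invoked.
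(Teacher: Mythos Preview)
Your proposal is correct and follows essentially the same approach as the paper: simulate the unit-cost algorithm of Theorem~\ref{thm:main-unit-cost-max-alg} in the Turing model by maintaining all intermediate probabilities in PoE form (bases drawn from the rule-probability numerators/denominators), and discharge each comparison via Proposition~\ref{prop:main-comp-prop}, invoking part~(1) under hypothesis~(A) and part~(2) under hypothesis~(B) since the number of distinct bases is then bounded by a fixed constant. Your write-up is in fact more detailed than the paper's own proof, which simply asserts that the claims ``follow easily'' from the observation that the unit-cost algorithm uses only multiplications and comparisons on numbers built from rule probabilities.
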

\begin{proof}
The claims follow easily from the proof of
Theorem \ref{thm:main-unit-cost-max-alg}.
Specifically, recall that the unit-cost RAM
algorithms in the proof of 
Theorem \ref{thm:main-unit-cost-max-alg} only
involve multiplication of rational numbers starting with base numbers
that are rule probabilities of the given SCFG $G$, as well
as taking the maximum over such numbers 
(which we can carry out by comparisons).
We can therefore maintain all the computed numbers
in PoE format, and  based on the postulated 
conditions  it follows from
Proposition \ref{prop:main-comp-prop} 
that we can carry out all the necessary comparisons
in P-time, yielding a P-time algorithm overall
which computes the relevant probabilities in PoE notation,
and which can compare two such probabilities.
 \end{proof}

Furthermore, without any conjectures,  essentially the same algorithm 
can be used to approximate the maximum parsing probability
of a string:

\begin{corollary}
\label{approx}
Given any SCFG, $G$, with rational rule probabilities,
given a string, $w \in \Sigma^*$, where $\Sigma$ is the terminal alphabet
of $G$, and given rational $\epsilon >0$,
there is an algorithm (in the standard Turing model) 
that runs in time polynomial in $|G|$, $|w|$ and $\log(1/\epsilon)$,
which determines whether $w \in L(G)$ and if so,
computes a value $v$ such that
$| \log_2 (p^{\max}_{G,w}) -v| \leq \epsilon$
and the DAG representation of a parse tree of $w$ that has probability
$\geq (1-\epsilon) p^{\max}_{G,w}$.
\end{corollary}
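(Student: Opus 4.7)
The plan is to reduce to a log-transformed version of the unit-cost RAM algorithm from Theorem \ref{thm:main-unit-cost-max-alg}, performed with bounded-precision rational arithmetic on the approximate logarithms of the rule probabilities. First, I would decide whether $w \in L(G)$ by running an ordinary (non-probabilistic) CFG parser on the underlying grammar of $G$; this is polynomial in $|G|$ and $|w|$, and if $w \notin L(G)$ we report this and halt. Otherwise $p^{\max}_{G,w} > 0$, and we proceed to the approximate probabilistic computation. We may also assume wlog (by Lemma \ref{grammar-sd2nf-lem}) that $G$ is in SNF.

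Next, as pointed out in the footnote to Theorem \ref{thm:main-unit-cost-max-alg} attached to Knuth's algorithm, each of the three subprocedures inside the proof of that theorem (Knuth's algorithm for $p^{\max}_{A,\epsilon}$, Dijkstra's algorithm on the auxiliary graph $H$ for $p^{A}_{\max,B}$, and the SNF dynamic program of Figure \ref{fig:max-dyn-prog} for $p^{\max}_{A,w}$) can be equivalently reformulated as minimizing a sum of non-negative weights of the form $-\log_2 p(r)$, with products of probabilities turning into sums of logs and ``$\max$ of products'' turning into ``$\min$ of sums.'' Let $N = N(|G|,|w|)$ be an explicit polynomial upper bound, obtained by inspecting the three subprocedures, both on the total number of arithmetic operations they perform and on the number of rule-probability factors (with multiplicity) appearing in the product represented by any intermediate or final quantity. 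Using Proposition \ref{prop:approx-log}, for each of the $O(|G|)$ distinct rule probabilities $p(r)$ I would compute a rational $v_r$ such that $|(-\log_2 p(r)) - v_r| \leq \delta$, where $\delta := \epsilon/(4N)$; since $p(r)$ has encoding size polynomial in $|G|$, each such $v_r$ is produced in time polynomial in $|G|$ and $\log(1/\delta) = O(\log|G| + \log|w| + \log(1/\epsilon))$.

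I would then run the min-sum versions of the three subprocedures using the $v_r$'s in place of $-\log_2 p(r)$, and output the resulting value $v$ (with the sign flipped, since $\log_2 p^{\max}_{G,w} \le 0$) together with the DAG recorded by the algorithm. By the triangle inequality, every intermediate approximate sum is within $N\delta = \epsilon/4$ of the corresponding true value of $-\log_2$ of a product of rule probabilities. Hence the returned value $v$ satisfies $|\log_2 p^{\max}_{G,w} - v| \leq \epsilon/4 \leq \epsilon$. Moreover, whenever the algorithm takes a $\min$ between two approximated sums, either the two true $-\log_2$ values differ by more than $2N\delta = \epsilon/2$ (in which case the comparison is decided correctly) or they differ by at most $\epsilon/2$ (in which case either branch is ``almost optimal''); propagating this, the DAG ultimately returned encodes a legitimate parse DAG of $w$ whose true log probability is within $\epsilon$ of $\log_2 p^{\max}_{G,w}$, and hence whose probability is at least $2^{-\epsilon}\, p^{\max}_{G,w} \geq (1-\epsilon)\, p^{\max}_{G,w}$ for $\epsilon \in (0,1]$ (with the easy case $\epsilon>1$ handled separately).

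The main obstacle is the careful bookkeeping of how approximation errors compound through the three nested stages, and in particular verifying that the DAG extracted from an approximate run actually corresponds to a genuine parse DAG whose true probability is close to optimum. This reduces to (i) explicitly bounding, by a polynomial $N$, the multiplicative depth of the PoE expressions produced by the unit-cost algorithm of Theorem \ref{thm:main-unit-cost-max-alg}, so that a per-rule log error of $\delta$ yields at worst a total additive error of $N\delta$, and (ii) showing inductively that the locally-suboptimal choice made at a close comparison still leads to a globally $\epsilon$-optimal parse DAG. Neither step requires any number-theoretic input beyond Proposition \ref{prop:approx-log}, which is why this approximation result is unconditional.
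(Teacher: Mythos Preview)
Your overall plan---run the log-transformed min-sum version of the algorithm from Theorem \ref{thm:main-unit-cost-max-alg} on approximated logs of the rule probabilities---is exactly the paper's approach. However, there is a genuine quantitative gap in your error analysis.

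You take $N$ to be a \emph{polynomial} in $|G|$ and $|w|$ that simultaneously bounds the number of arithmetic operations and ``the number of rule-probability factors (with multiplicity)'' in any intermediate product. The first bound is fine, but the second is false: for an SNF grammar with $n$ nonterminals, an optimal parse tree (already for $w=\epsilon$) can use a rule up to $2^n-1$ times, and the paper shows that the sum of exponents in the PoE expressions arising during the algorithm can be as large as $2n^2 2^n$. Consequently your triangle-inequality step ``every intermediate approximate sum is within $N\delta$ of the true value'' does not give $\epsilon/4$; with $\delta=\epsilon/(4N_{\mathrm{poly}})$ the actual error can be exponentially larger than $\epsilon$. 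The fix, which is what the paper does, is to bound the sum of exponents by $m=2n^2 2^n$ and take the precision $k=\lceil 2n+\log(1/\epsilon)\rceil$ bits so that $2^{-k}<\epsilon/(2m)$; this is still polynomial time because only $\log m$ (not $m$) enters the running time of Proposition \ref{prop:approx-log}.

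A smaller point: your ``propagating'' argument through individual $\min$ comparisons is not airtight, since close comparisons can cascade. The paper avoids this entirely with a two-line global argument: if $S^*$ is the tree returned by the approximate algorithm and $S^{\mathrm{opt}}$ is the true optimum, then $v'(S^*)\ge v'(S^{\mathrm{opt}})$ by optimality with respect to the approximated weights, and since $|v(S)-v'(S)|\le m\cdot 2^{-k}$ for \emph{every} parse tree $S$, one gets $v(S^*)\ge v(S^{\mathrm{opt}})-2m\cdot 2^{-k}$ directly. I recommend you replace your comparison-by-comparison reasoning with this argument.
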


\begin{proof}
We will  use essentially the same 
algorithm as in the proof of Theorem \ref{thm:main-unit-cost-max-alg},
but we will instead use the log-transformed (shortest path) variants
of Dijkstra's and Knuth's algorithms 
(see footnote \ref{footnote:log-transform}), 
by first approximating the weights $-\log p$ corresponding to 
rule probabilities $p$,
to sufficient precision.  We will show that approximating the weights 
$-\log p$ to within additive error $2^{-k}$, where $k$ is polynomial in  $|G|$,
$|w|$ and $\log(1/\epsilon)$, 
will suffice to allow the algorithm to 
approximate $p^{\max}_{G,w}$ to within the desired additive error $\epsilon > 0$.

Assume, wlog, that we first put the SCFG in SNF form.
Let $n$ be the number of nonterminals of the SNF grammar. 
Let us first estimate the size of the PoE expressions
for the probabilities that are computed by the algorithm of
Theorem \ref{thm:main-unit-cost-max-alg}.
It is easy to show that the maximum probability $p^{\max}_{A,\epsilon}$ 
of derivation
of $\epsilon$ from a nonterminal $A$ is given by a PoE expression
whose bases are rule probabilities and where the sum of the exponents 
is less than $2^{n}$.
This can be shown by an easy induction on the iterations of Knuth's algorithm
in Fig. \ref{fig:knuth-alg}, where the inductive claim is that
the sum of the exponents for a nonterminal that 
is added to $D$ in the $i$th iteration is at most $2^i-1$.

Then we construct a directed graph $H$ and use Dijkstra's algorithm
to compute probabilities $p^{\max}_{A,B}$.
The edges of $H$ have probabilities of the form
$p(r) \cdot p^{\max}_{C,\epsilon}$, and the optimal path between 
any two nodes is simple, thus it has length at most $n-1$.
Therefore, each probability $p^{\max}_{A,B}$ in PoE notation has sum of exponents
at most $(n-1)\cdot 2^n$.

If we consider then the algorithm of Fig. \ref{fig:max-dyn-prog},
it is easy to show by induction on $j$
that a probability $p^{A}_{\max,i,j}$ in PoE notation
has sum of exponents at most $(2j-1)((n-1) 2^n +1)$.
Thus, the PoE expression for the final probabilities, as well as
all the probabilities during the computation,
have sum of exponents less than $2n^2 2^n$.
Or in other words, the logarithms of the computed probabilities are
(positive) linear combinations of the logarithms of the rule probabilities
where the sum of the coefficients is less than $2n^2 2^n$.

Our algorithm for the approximate computation of the maximum
parsing probability starts by computing approximately the
logarithms of the rule probabilities to a precision
of $k = \lceil 2n + \log(1/\epsilon) \rceil$ bits, i.e. within additive error
$2^{-k} < \epsilon / 2^{2n} < \epsilon / (4n^2 2^n)$.
Then we apply the Algorithm of Theorem \ref{thm:main-unit-cost-max-alg}
using the log-transformed (additive) 
versions of Knuth's and Dijkstra's algorithms, and
doing all the computations (exactly) in logarithms, by
using the {\em approximated} values for the logarithms,
$-\log p$, of the rule probabilities $p$.
Note that every computed quantity is then a linear combination of 
the (approximated) logarithms of rule probabilities.

We now observe that the cumulative additive error that can
be introduced into the final result,
because of the initial approximation
to the logarithms of the rule probabilities, is at most 
$4n^2 2^n \cdot 2^{-k} < \epsilon$.
The reason why this holds is the following:

Consider a solution $S$, i.e., a succinctly represented
parse tree (respresented as a DAG).
Let $v(S)$ denote the logarithm of the value of this solution.
In other words, $v(S)$ denotes the log of the PoE expression 
giving the product of all rule probabilities used in $S$.
Thus, $v(S)$ can be written as $\sum_{r}  n_{r} \log p(r)$,
where the sum is over all rules $r$.
Recall that $p(r)$ denotes the probability of rule $r$.   

Assume that for all the logarithms of rule proabilities, $\log p(r)$,
we have computed an approximation, $a(r)$, 
such that $|(\log p(r)) - a(r) | < 2^{-k}$.  
Let $v'(S)$ denote the approximate log value of the solution $S$,
i.e., $v'(S) = \sum_{r}  n_{r} \cdot a(r)$.

Let $S^*$ denote the solution (parse tree) that is computed by 
the algorithm that uses the approximated values $\log p(r)$.
Let $S^{opt}$ denote the optimal solution (which would be computed
if we instead had used exact arithmetic and comparisons on PoE numbers).

First, note that $v'(S^*) \geq v'(S^{opt})$, because the 
approximation algorithm is guaranteed
to output the optimal (maximum value) solution $S^*$ with respect the
{\em approximated} logarithms $a(p(r))$ of the rule probabilities $p(r)$.

Next, note that both $v(S^*)$ and $v(S^{opt})$ 
are positive linear combinations of the actual logs of rule probabilities,
with their coefficients summing to at most $m = 2n^2 2^n$.
So $v'(S^*)$ differs from $v(S^*)$ 
by at most $2^{-k} m$, and likewise
$v'(S^{opt})$ differs from $v(S^{opt})$ by at most $2^{-k} m$.
Therefore, since we already argued that $v'(S^*) \geq v'(S^{opt})$,
it must be the case that
$v(S^*) \geq v(S^{opt}) - (2 \times 2^{-k} m)$.
Note that we have chosen $k$ such that
$2 \times 2^{-k}m = 2^{-k} 4 n^2 2^n < \epsilon$.
This completes the proof.

The algorithm computes at the same time the DAG representation 
of a parse tree $T$, whose probability satisfies
$\log_2 (p(T)) \geq \log_2 p^{\max}_{G,w} -\epsilon$.
Thus, $p(T) \geq   p^{\max}_{G,w} /2^{\epsilon} \geq (1-\epsilon)p^{\max}_{G,w}$.
 \end{proof}

\vspace*{0.05in}

\noindent {\bf Acknowledgements.}  Thanks to Howard Karloff and
Igor Shparlinski for comments on an earlier draft. In particular,
thanks to Igor for pointing out references \cite{BachDS93,bernstein05} on
efficient factor refinement.   We also thank an anonymous referee
for pointing out Theorem 6 in Shub's paper \cite{Shub-smalefest-93} to us.

\newpage

\appendix

\section{Appendix}

\subsection{Proof of Proposition \ref{prop:spe-is-same-as-arith-circ}}

\noindent {\bf Proposition \ref{prop:spe-is-same-as-arith-circ}.} {\em   
There is a simple P-time
translation from a given number represented in PoE to
the same number represented as an arithmetic 
circuit over $\{*,/\}$ with integer inputs (represented in binary).
Likewise, there is a simple P-time translation in the other direction.}

\vspace*{0.05in}

\begin{proof}
Given a number in PoE, observe first that,
for integers $a_i$ and $b_i$, we can use Horner's
rule to represent $a_i^{b_i}$  by an arithmetic circuit which takes
$a_i$ as an input, and performs repeated squaring and multiplication
by $a_i$ to obtain a circuit evaluating to $a_i^{b_i}$ whose depth and number of gates is bounded by
$\log_2(b_i)$.    We can then obviously compute a linear size circuit yielding
the product  $a^b = \prod_i a_i^{b_i}$.

In the other direction, given an arithmetic circuit $C$ over
$\{*,/\}$, with positive integer inputs $a_1, \ldots, a_n$,
we show by induction on the depth of the circuit 
that the numbers computed by a gate at depth $d$ can be translated
to a PoE $a^b$, where the base numbers are the $a_i$'s and where the exponents
$b_i$ have absolute value at most $2^d$, and 
can thus be computed in linear time, given $C$
as input.
The claim is obvious in the base case, for input gates of the 
circuit $C$, which are the $a_i$'s.  Inductively, assume that for some
gate $g_i = g_j \odot g_k$ we already have
PoE representations for $g_j$ and $g_k$, given by $a^{b(j)}$ and
$a^{b(k)}$ respectively,
where $\odot \in \{ *, / \}$.  
Assume that $d= \max \{depth(g_i), depth(g_k)\}$. By induction
we can assume each component of the lists (vectors) $b(j)$ and $b(k)$
is at most $2^d$.

Now if $\odot \equiv "*"$, then clearly the
value computed by gate $g_i$ can be represented
by $a^{b(i)}$, where $b(i) = b(j)+b(k)$ denotes component-wise addition
of the two vectors of integers $b(j)$ and $b(k)$.
Likewise, if $\odot \equiv "/"$, then clearly $g_i$ can be represented
by $a^{b(i)}$ where $b(i) = b(j) - b(k)$.   In both cases, for every component
$m \in \{1,\ldots,n\}$, we have $|b(i)_m| \leq 2^{d+1}$, since we at
most double the absolute value by adding or subtracting two numbers
with absolute value $\leq 2^d$.  Since $g_i$ has depth $d+1$, we are done.
 \end{proof}

\subsection{Proof of Proposition \ref{prop:approx-log}}

Recall we use $\log(x)$ to denote the {\em natural} logarithm of $x$,
and use $\log_2(x)$ to denote the $\log$ base 2.

\vspace*{0.09in}

\noindent {\bf Proposition 
\ref{prop:approx-log}.}
{\em 
There is an algorithm that,
given a positive integer $a$, encoded in binary, and given
a positive integer $j$, encoded in unary, computes a rational value $v_a$,
such that $$| \log(a) - v_a | < 2^{-j}$$
The algorithm runs in time polynomial
in $j$ and $\log_2(a)$  (in the Turing model).
}

\begin{proof}
Recall the standard power series 
for the natural logarithm of $x+1$, which holds for 
any $x$ in the range $-1 < x < 1$:

\begin{equation}
\label{eq:log-series}
\log (x + 1) =  x - x^2/2 + x^3/3 - x^4/4 + \ldots =  \sum^{\infty}_{i=1} (-1)^{i+1} \frac{x^i}{i}
\end{equation}

Consider any positive integer $a$.   We can assume $a > 1$, wlog,
because otherwise  $\log(1) = 0$.
By examining the binary encoding of $a$, we can easily determine a
positive integer $m > 0$,
such that $2^m \leq a < 2^{m+1}$.    Note that 

$$\log(a) = \log(a/2^{m+1}) + \log(2) * (m+1)$$

Thus, to compute $\log(a)$ ``to within sufficient accuracy'', 
it suffices to compute both $\log(a/2^{m+1})$  and $\log(2)$, 
``to within sufficient accuracy''.   
But note that since  
$a \geq 2^m$, we have that $1/2 \leq a/2^{m+1} < 1$.    
Letting $y := (a/2^{m+1} -1)$,   since $-1/2 \leq y < 0$,
we have $y+1 = a/2^{m+1}$ is in the convergent region of the
series (\ref{eq:log-series}), so that:

$$\log(a/2^{m+1}) = \sum^{\infty}_{i=1} (-1)^{i+1} \frac{y^i}{i}$$

Now, note that since  $-1/2 < y < 0$,  we have 
$|\sum^{\infty}_{i=k+1} (-1)^{i+1} \frac{y^i}{i}| < 
\sum^{\infty}_{i=k+1} (1/2)^i \leq (1/2^k)$.
Thus, to compute an approximant  $v'$ for $\log(a/2^{m+1})$ such that $| \log(a/2^{m+1}) - v' | < (1/2^k)$,
all we have to do is to compute the first $k+1$ terms of the series 
$\sum^{\infty}_{i=1} (-1)^{i+1} \frac{y^i}{i}$.    This we can easily do in time polynomial in $k$ and in the encoding
size of $a$, by just carrying out the arithmetic.
(The encoding size of the numbers calculated this way will not get more than polynomially large in the 
encoding size of $a$ and $k$: roughly each term has encoding size at most $k*size(a)$,  so the encoding
size of the sum is at most $\log_2(k) * k * size(a)$.)

Next, we need to compute an approximation of $\log(2)$.
To do this, we use a well-known alternative series derivable from
(\ref{eq:log-series}): we have  $\log(2) = - \log(1/2) = -\log(1 +  -(1/2)) = 
\sum^{\infty}_{i=1}  \frac{1}{i 2^i}$.
Again, we have  $\sum^{\infty}_{i=k+1}  \frac{1}{i 2^i} \leq 1/2^{k}$.
Thus, we can  compute an approximant  $v''$ of $\log(2)$, such that $| v'' - \log(2) | < 2^{-j}$,
by just computing the first $j+1$ terms of the series $\sum^{\infty}_{i=1}  \frac{1}{i 2^i}$.

We would like to combine a suitable approximation $v'$ of $\log(a/2^{m+1})$ and a suitable approximation 
$v''$ of $\log(2)$ to get an approximation of $\log(a) = \log(a/2^{m+1}) + \log(2)(m+1)$, to 
within a desired additive error $2^{-j}$ in time polynomial in the encoding size of $a$ and in $j$. 

We only need to observe that if $| \log(a/2^{m+1}) - v' | \leq 2^{-(j+1)}$   and $| \log(2) - v'' | < 2^{-(j+1)}/(m+1)$,
and if we let $v_a := (v'-v''(m+1))$,  then 

\begin{eqnarray*}
| \log(a) -  v_a |  & = & | (\log(a/2^{m+1}) + \log(2) (m+1)) - (v' + v''(m+1)) | \\
& = &  | (\log(a/2^{m+1})  - v')  +  (m+1) (\log(2) - v'') | \\ 
& \leq  &  | (\log(a/2^{m+1})  - v') | +  (m+1) |\log(2) - v''| \\
& \leq &   2^{-(j+1)} + 2^{-(j+1)} = 2^{-j}
\end{eqnarray*}

Thus, given positive integer $a$ in binary,  
we can compute a $2^{-j}$-approximation, $v_a$,  of $\log(a)$ in 
time polynomial in the encoding size of $a$ and in $j$, in the Turing model of computation.
 \end{proof}

\end{document}